\definecolor{red}{RGB}{255,0,0}
\definecolor{blue}{RGB}{0,0,255}
\definecolor{green}{RGB}{0,255,0}
\newcommand {\abs}[1]  {\left\vert#1\right\vert}
\newcommand {\set}[1]  {\left\{#1\right\}}
\newcommand{\itemref}[1]{{\textrm{\ref{itm:#1}})}}
\newtheorem{theorem} {Theorem}
\newtheorem{lemma}   [theorem]  {Lemma}
\newtheorem{definition}         {Definition}
\newtheorem{corollary}[theorem] {Corollary}
\newtheorem{observation}[theorem] {Observation}
\newtheorem{proposition}[theorem] {Proposition}
\newcommand{\cg}{{\cal G}}
\newcommand{\cs}{{\cal S}}
\newcommand{\bigoh}[1] {{$\mathcal{O}$}$\left(#1\right)$}
\begin{document}
\begin{frontmatter}
\title{Equimatchable Claw-Free Graphs\tnoteref{TR-SLO}}
\tnotetext[TR-SLO]{The support of 213M620 Turkish-Slovenian TUBITAK-ARSS Joint Research Project is greatly acknowledged.}

\author[sharif]{Saieed Akbari\fnref{saieed}}
\ead{s\_akbari@sharif.edu}
\fntext[saieed]{Part of this research was carried out while Saieed Akbari was visiting Istanbul Center for Mathematical Sciences (IMBM) whose support is greatly acknowledged.}

\author[gtu]{Hadi Alizadeh}
\ead{halizadeh@gtu.edu.tr}

\author[bu]{T{\i}naz Ekim}
\ead{tinaz.ekim@boun.edu.tr}

\author[gtu]{Didem G{\"o}z{\"u}pek}
\ead{didem.gozupek@gtu.edu.tr}

\author[bu,telhai]{Mordechai Shalom\fnref{mordo}}
\ead{cmshalom@telhai.ac.il}
\fntext[mordo]{The work of this author is supported in part by the TUBITAK 2221 Programme.}

\address[sharif]{Department of Mathematical Sciences, Sharif University of Technology, 11155-9415, Tehran, Iran}
\address[gtu]{Department of Computer Engineering, Gebze Techical University, Kocaeli, Turkey}
\address[bu]{Department of Industrial Engineering, Bogazici University, Istanbul, Turkey}
\address[telhai]{TelHai Academic College, Upper Galilee, 12210, Israel}

\cortext[corr]{Corresponding Author}

\begin{abstract}
A graph is equimatchable if all of its maximal matchings have the same size. A graph is claw-free if it does not have a claw as an induced subgraph. In this paper, we provide the first characterization of claw-free equimatchable graphs by identifying the equimatchable claw-free graph families. This characterization implies an efficient recognition algorithm. 

\end{abstract}

\begin{keyword}
Equimatchable graph \sep Factor-critical \sep Connectivity.
\end{keyword}
\end{frontmatter}

\section{Introduction}
A graph $G$ is \emph{equimatchable} if every maximal matching of $G$ has the same cardinality. Equimatchable graphs are first considered by Grünbaum \cite{grunbaum}, Lewin \cite{lewin},  and Meng \cite{meng} simultaneously in 1974. They are formally
introduced by Lesk et al. in 1984 \cite{equi-LPP83}. Equimatchable graphs can be recognized in polynomial time (see \cite{equi-LPP83} and \cite{demange_ekim_equi}). From the structural point of view, all 3-connected planar equimatchable graphs and all  3-connected cubic equimatchable graphs are determined by Kawarabayashi et al. \cite{classes}. Besides, Kawarabayashi and Plummer showed that equimatchable graphs with fixed genus have bounded size \cite{genus}, while Frendrup et al. characterized equimatchable
graphs with girth at least 5 \cite{girth}. Factor-critical equimatchable graphs with vertex connectivity 1 and 2 are characterized by Favaron \cite{favaron1986equimatchable}.

A graph $G$ is \emph{well-covered} if every maximal independent set of $G$ has the same cardinality. Well-covered graphs are closely related to equimatchable graphs since the line graph of an equimatchable graph is a well-covered graph. Finbow et al. \cite{finbow1994characterization} provides a characterization of well-covered graphs that contain neither 4-cycles nor 5-cycles, whereas Staples \cite{staples1979some} provides characterizations of some subclasses of well-covered graphs. A graph is \emph{claw-free} if it does not have a claw as an induced subgraph. Recognition algorithms for claw-free graphs have been presented by Kloks et al. \cite{Kloks2000}, Faenza et al. \cite{faenza2014solving}, and Hermelin et al. \cite{hermelin2011domination}.
Claw-free well-covered graphs have been investigated by Levit and Tankus \cite{levit2015weighted} and by Hartnell and Plummer \cite{hartnell1996}. However, to the best of our knowledge, there is no previous study in the literature about claw-free equimatchable graphs.

In this paper, we investigate the characterization of claw-free equimatchable graphs. In Section \ref{sec:Preliminaries}, we give some preliminary results. In particular, we show that the case of equimatchable claw-free graphs with even number of vertices reduces to cliques with an even number of vertices or a 4-cycle, and all graphs with odd number of vertices and independence number at most 2 are claw-free and equimatchable. We also show that the remaining equimatchable claw-free graphs have (vertex) connectivity at most 3. Based on this fact, in Section \ref{sec:OddOrder}, we focus on 1-connected, 2-connected (based on a result of Favaron \cite{favaron1986equimatchable}) and 3-connected equimatchable claw-free graphs with odd number of vertices separately. Our full characterization is summarized in Section \ref{sec:Summary}, where we provide a recognition algorithm running in time \bigoh{m^{1.407}} where $m$ refers to the number of edges in the input graph.

\section{Preliminaries}\label{sec:Preliminaries}
In this section, after giving some graph theoretical definitions, we mention some known results about matchings in claw-free graphs and develop some tools for our proofs.

Given a simple graph $G=(V(G), E(G))$, a \emph{clique} (resp. \emph{independent set}) of $G$ is a subset of pairwise adjacent (resp. non-adjacent) vertices of $G$. The \emph{independence number} of $G$ denoted by $\alpha(G)$ is the maximum size of an independent set of $G$. We denote by $N(v)$ the set of neighbors of $v \in V(G)$.
For a subgraph $G'$ of $G$, $N_{G'}(v)$ denotes $N(v) \cap V(G')$. A vertex $v$ is \emph{complete to} a subgraph $G'$ if $N_{G'}(v)=V(G')$. For  $U \subseteq V(G)$, we denote by $G[U]$ the subgraph of $G$ induced by $U$. For simplicity, according to the context, we will use a set of vertices or the (sub)graph induced by a set of vertices in the same manner.  We denote by $uv$ a potential edge between two vertices $u$ and $v$. Similarly, we denote paths and cycles of a graph as sequences of its vertices. In this work, $n$ denotes the \emph{order} $\abs{V(G)}$ of the graph $G$. We say that $G$ is an \emph{odd graph} (resp. \emph{even graph}) if $n$ is odd (resp. even). For a set $X$ and a singleton $\set{x}$ we use the shorthand notations $X + x$ and $X - x$ for $X \cup \set{x}$ and $X \setminus \set{x}$, respectively.

We denote by $P_p$, $C_p$ and $K_p$ the path, cycle, and complete graph, respectively, on $p$ vertices and by $K_{p,q}$ the complete bipartite graph with bipartition sizes $p$ and $q$.
The graph $K_{1,3}$ is termed \emph{claw}. A graph is \emph{claw-free} if it contains no claw as an induced subgraph.

A set of vertices $S$ of a connected graph $G$ such that $G \setminus S$ is not connected is termed a \emph{cut set}. A cut set is  \emph{minimal} if none of its proper subsets is a cut set. A \emph{$k$-cut} is a cut set with $k$ vertices. A graph is \emph{$k$-connected} if it has more than $k$ vertices and every cut set of it has at least $k$ vertices. The \emph{(vertex) connectivity} of $G$, denoted by $\kappa(G)$, is the biggest number $k$ such that $G$ is $k$-connected.

A \emph{matching} of a graph $G$ is a subset $M \subseteq E(G)$ of pairwise non-adjacent edges. A vertex $v$ of $G$ is \emph{saturated} by $M$ if $v \in V (M)$ and \emph{exposed} by
$M$ otherwise. A matching $M$ is \emph{maximal} in $G$ if no other matching of $G$ contains $M$. Note that a matching $M$ is maximal if and only if $V(G) \setminus V(M)$ is an independent set. A matching $M$ is a \emph{perfect matching} of $G$ if $V(M)=V(G)$.

A graph $G$ is \emph{equimatchable} if every maximal matching of $G$  has the same cardinality. A graph $G$ is \emph{randomly matchable} if every matching of $G$  can be extended to a perfect matching. In other words, randomly matchable graphs are equimatchable graphs admitting a perfect matching. A graph $G$  is \emph{factor-critical} if $G - u$ has a perfect matching for every vertex $u$ of $G$.

The following facts are frequently used in our arguments.

\begin{lemma}\label{lem:ClawFreeAdmitsPerfectMatching}
\cite{sumner1974graphs} Every connected claw-free even graph admits a perfect matching.
\end{lemma}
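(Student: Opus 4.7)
The plan is to prove this classical theorem of Sumner by contradiction. Suppose $G$ is a connected claw-free graph of even order admitting no perfect matching, chosen with $|V(G)|$ minimum among all such counterexamples. Let $M$ be a maximum matching of $G$; by Berge's theorem there is no $M$-augmenting path, and by parity the set $U$ of $M$-exposed vertices satisfies $|U|\geq 2$. I would pick two exposed vertices $u,u'\in U$ at minimum distance in $G$, and let $P = u\, v_1\, v_2 \cdots v_k\, u'$ be a shortest $u$-$u'$ path.

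The first observation is that each internal vertex $v_i$ must be $M$-saturated, otherwise one would find two exposed vertices at distance strictly smaller than $d(u,u')$, contradicting the choice of $u,u'$. The shortness of $P$ also forbids chords $v_iv_j$ with $|i-j|\geq 2$. I would then examine the $M$-partner of $v_1$: if $v_1v_2\in M$ the analysis shifts to the next alternation along $P$; otherwise $v_1$ is matched to some vertex $w$ outside $\set{u,v_2}$, and the triple $\set{u,v_2,w}$ consists of three neighbors of $v_1$. Proving that this triple is pairwise independent, using the shortness of $P$ (to rule out $uv_2$) and the non-augmentability of $M$ (to rule out $uw$ and $v_2w$, since either edge would combine with the tail of $P$ to produce an $M$-augmenting path or a shorter exposed-to-exposed path), gives an induced claw at $v_1$ and contradicts claw-freeness.

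The main obstacle is handling the alternating case where several edges $v_iv_{i+1}$ lie in $M$: one must locate the ``right'' central vertex of the would-be claw and select three of its neighbors whose pairwise non-adjacency follows from the minimality of $d(u,u')$ and of $|V(G)|$ together with the absence of an $M$-augmenting path. I expect this bookkeeping along $P$ to be the delicate step, since one has to check that the partners chosen outside $P$ are genuinely new vertices and that every putative edge among the three selected neighbors can be ruled out by one of the three extremal properties (shortness of $P$, maximality of $M$, minimality of $G$). Once this case analysis is carried out, claw-freeness produces a contradiction at a carefully chosen vertex and completes the proof.
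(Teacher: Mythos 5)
The paper offers no proof of this lemma; it is quoted directly from Sumner's 1974 paper, so your attempt has to stand on its own. As written it does not: it is a plan whose central case you explicitly leave open (``I expect this bookkeeping along $P$ to be the delicate step''), and that case is precisely where the content of the theorem lies. Concretely, there are three problems. First, your extremal choice is wrong for the argument you intend to run: you fix one maximum matching $M$ and then minimize $d(u,u')$ over $M$-exposed pairs, but every step of the form ``edge $uw$ would give a shorter exposed-to-exposed path'' works by replacing $M$ with $M\triangle\set{uw,wv_1}$ (or a longer alternating swap) to expose a \emph{different} vertex closer to $u'$; that contradicts nothing unless the minimization is taken over \emph{all} maximum matchings and all exposed pairs simultaneously. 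Second, the edge $v_2w$ is not ruled out by ``non-augmentability'': $u\,w\,v_2\cdots u'$ is not $M$-alternating and $u\,v_1\,w\,v_2$ ends at a saturated vertex, so there is no augmenting path; to exclude $v_2w$ you must swap along $u\,v_1\,w\,v_2\,x_2$, where $x_2$ is the $M$-partner of $v_2$, to expose $x_2$ at distance at most $k$ from $u'$ --- again requiring the stronger extremal setup, plus a check that $x_2$ exists and is off $P$. Third, in the deferred case $v_1v_2\in M$ you must either propagate the argument to $v_3,v_5,\dots$ (showing each is matched to its successor on $P$, with a claw or a closer exposed pair whenever a partner leaves $P$) until $P$ itself becomes an $M$-augmenting path; none of this is done, and your ``minimal counterexample'' hypothesis is set up but never invoked anywhere.

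The approach is salvageable --- the exposed-pair-at-minimum-distance argument can be completed along exactly the lines above --- but you should be aware that the standard proof (Sumner's own, and Las Vergnas's) is a cleaner induction on $\abs{V(G)}$: one shows that a connected claw-free graph on at least two vertices has an edge $uv$ with $G-\set{u,v}$ connected (take $v$ at maximum distance from a root and choose its neighbour $u$ carefully, using claw-freeness to keep the remainder connected), then applies the induction hypothesis to $G-\set{u,v}$ and adds the edge $uv$. That route avoids all of the alternating-path bookkeeping that your sketch postpones.
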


\begin{corollary} \label{coro:ClawFreeTwoConnectedIsFactorCritical}
\cite{sumner1974graphs} Every 2-connected claw-free odd graph is factor-critical.
\end{corollary}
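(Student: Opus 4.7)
The plan is to derive the corollary directly from Lemma~\ref{lem:ClawFreeAdmitsPerfectMatching} (Sumner's result) by a short three-step verification that $G \setminus \{u\}$ fits the hypotheses of that lemma for every vertex $u$.

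Let $G$ be a 2-connected claw-free odd graph and fix an arbitrary $u \in V(G)$. First, I would observe that $G \setminus \{u\}$ is \emph{even}, since removing one vertex from an odd graph changes the parity of $n$. Second, $G \setminus \{u\}$ is \emph{claw-free}: any induced $K_{1,3}$ in the subgraph would also be an induced $K_{1,3}$ in $G$, contradicting claw-freeness of $G$. Third, $G \setminus \{u\}$ is \emph{connected}: this is precisely the definition of 2-connectivity, namely that no single-vertex deletion disconnects $G$.

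With these three properties in hand, Lemma~\ref{lem:ClawFreeAdmitsPerfectMatching} applies to $G \setminus \{u\}$ and guarantees a perfect matching of $G \setminus \{u\}$. Since $u$ was chosen arbitrarily, this shows that $G \setminus \{u\}$ admits a perfect matching for every $u \in V(G)$, which is exactly the definition of factor-criticality given earlier in the preliminaries.

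There is essentially no obstacle here; the only thing to be careful about is invoking 2-connectivity in the correct direction (to preserve connectivity after the deletion) and checking that parity and claw-freeness are preserved under single-vertex deletion, both of which are immediate. The corollary is thus a clean specialization of Sumner's lemma to the odd case.
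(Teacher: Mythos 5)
Your proof is correct and is exactly the derivation the paper intends: the statement is presented as a corollary of Lemma~\ref{lem:ClawFreeAdmitsPerfectMatching} (with only a citation to Sumner in place of a written proof), and applying that lemma to $G \setminus \{u\}$ after checking evenness, claw-freeness, and connectivity is the standard argument. No gaps.
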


\begin{lemma}\label{lem:RandomlyMatchable}
\cite{sumner1979randomly} A connected graph is randomly matchable if and only if it is isomorphic to $K_{2p}$ or $K_{p,p}$ ($p \ge 1$).
\end{lemma}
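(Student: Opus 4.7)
The plan is to prove both directions separately. For the "if" direction, I would verify directly: in $K_{2p}$, a matching $M$ exposes $2p-2\abs{M}$ vertices inducing $K_{2p-2\abs{M}}$, which admits a perfect matching completing $M$; in $K_{p,p}$, every matching edge uses one vertex from each part, so the exposed vertices induce a balanced $K_{p-\abs{M},p-\abs{M}}$, which similarly admits a perfect matching.

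For the converse, let $G$ be connected and randomly matchable. Extending the empty matching gives a perfect matching of $G$, so $\abs{V(G)}=2p$ for some $p\ge 1$. I would proceed by strong induction on $p$, the base $p=1$ yielding $G=K_2=K_{1,1}$. The central inductive observation is that for every edge $uv\in E(G)$ the graph $G-\set{u,v}$ is itself randomly matchable: any matching $N$ of $G-\set{u,v}$ together with $uv$ is a matching of $G$ whose perfect extension, restricted to $G-\set{u,v}$, yields a perfect matching extending $N$. Hence each component of $G-\set{u,v}$ is randomly matchable and, by the induction hypothesis, isomorphic to some $K_{2k}$ or $K_{k,k}$.

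I then distinguish two cases according to whether $G$ is bipartite. If $G$ is bipartite with parts $A,B$ of size $p$, I would show $G=K_{p,p}$ by contradiction: if $ab\notin E(G)$ for some $a\in A$, $b\in B$, then any perfect matching of $G-\set{a,b}$ is a maximal matching of $G$ (maximality uses $ab\notin E$) of size $p-1<p$, contradicting equimatchability. Hence it suffices to exhibit such a perfect matching. Starting from any perfect matching $M_0$ of $G$ with $aa',bb'\in M_0$: if $a'b'\in E(G)$ then $(M_0\setminus\set{aa',bb'})\cup\set{a'b'}$ works, and otherwise I would iterate, replacing $M_0$ by another perfect matching reachable via a symmetric difference along an alternating cycle, until the favourable configuration is attained. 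In the non-bipartite case $G$ contains an odd cycle, and applying the inductive description of $G-\set{u,v}$ for suitable edges $uv$ of the cycle, together with the odd-cycle constraint forcing triangles inside certain components, I would conclude $G=K_{2p}$. The main obstacle is the unfavourable bipartite subcase: guaranteeing that some perfect matching of $G$ actually yields $a'b'\in E(G)$ whenever $G\neq K_{p,p}$ requires a Hall-type argument combined with alternating-path manipulations among perfect matchings.
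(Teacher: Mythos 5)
The paper offers no proof of this lemma --- it is quoted verbatim from Sumner's 1979 paper --- so there is no in-house argument to compare yours against; I can only assess the proposal on its own terms. Your ``if'' direction, the reduction of the converse to an induction on $\abs{V(G)}/2$, and the key observation that $G-\set{u,v}$ is randomly matchable for every edge $uv$ are all correct and are indeed the standard opening moves. But the proposal has a genuine gap in the bipartite case: you correctly reduce to exhibiting a perfect matching of $G-\set{a,b}$ for a non-adjacent cross pair $a\in A$, $b\in B$, yet the step that is supposed to produce it --- ``iterate \dots\ until the favourable configuration is attained'' --- is precisely the missing argument, as you yourself concede. Hall's condition for $G$ does not descend to $G-\set{a,b}$, so something must actually be proved here. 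The gap is repairable: take a non-adjacent cross pair at minimum distance, which is then $3$, realized by a path $a\,x\,y\,b$; extend the matching $\set{xy}$ to a perfect matching $M_0$ (this is where random matchability, rather than the mere existence of a perfect matching, is used), and let $aa',bb'\in M_0$ (one checks $a'\neq x$, $b'\neq y$). If $a'b'\in E(G)$, then $(M_0\setminus\set{aa',bb'})+a'b'$ is a maximal matching exposing the non-adjacent pair $a,b$; if $a'b'\notin E(G)$, then $(M_0\setminus\set{aa',xy,bb'})\cup\set{ax,yb}$ is a maximal matching exposing the non-adjacent pair $a',b'$. Either way a maximal matching of size $p-1$ contradicts random matchability.

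The second and more serious gap is that the non-bipartite case is not an argument at all: ``the odd-cycle constraint forcing triangles inside certain components'' names no proof, and this case is the bulk of Sumner's theorem. Knowing by induction that every component of $G-\set{u,v}$ is an even clique or a balanced complete bipartite graph does not by itself force $G$ to be complete; one must analyse how $u$ and $v$ attach to those components (ruling out, for instance, two cliques glued through an edge, or a clique attached to a $K_{k,k}$) and show that any non-adjacent pair of vertices in a connected non-bipartite randomly matchable graph yields a maximal non-perfect matching. Until that case analysis is supplied, the converse direction is established only for bipartite $G$, and only after the repair described above.
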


Using the above facts, we identify some easy cases as follows.

\begin{proposition}\label{prop:EvenEquimatchable}
A connected even graph is claw-free and equimatchable if and only if it is  isomorphic to $K_{2p}$ ($p \geq 1$) or $C_4$.
\end{proposition}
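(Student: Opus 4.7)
The plan is to combine the two facts quoted just above the statement. The forward direction (checking that $K_{2p}$ and $C_4$ are claw-free, even, and equimatchable) is immediate: $K_{2p}$ has only perfect matchings as maximal matchings, $C_4$ has exactly two maximal matchings and both consist of two edges, and neither contains an induced claw.

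For the converse, let $G$ be a connected claw-free even equimatchable graph. First I would apply Lemma \ref{lem:ClawFreeAdmitsPerfectMatching} to conclude that $G$ admits a perfect matching $M^\star$. Since $G$ is equimatchable and $\abs{M^\star}=n/2$, every maximal matching of $G$ has size $n/2$, hence is perfect. In particular, any matching of $G$ extends to a perfect matching, so $G$ is randomly matchable. Lemma \ref{lem:RandomlyMatchable} then forces $G \cong K_{2p}$ or $G \cong K_{p,p}$ for some $p \ge 1$.

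It remains to rule out $K_{p,p}$ for $p\ge 3$ under the claw-free hypothesis. This is the only nontrivial step, and it is easy: if $p\ge 3$, any vertex $u$ in one part together with any three vertices of the other part induces a $K_{1,3}$ (the three vertices are pairwise non-adjacent since they belong to the same part of the bipartition). Hence $K_{p,p}$ is claw-free only for $p\in\set{1,2}$, giving $K_2=K_{2\cdot 1}$ and $K_{2,2}=C_4$, which together with the family $\set{K_{2p}}_{p\ge 1}$ exhaust the possibilities.

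There is no genuine obstacle here; the argument is just a two-line reduction from the structural claim to Lemmas \ref{lem:ClawFreeAdmitsPerfectMatching} and \ref{lem:RandomlyMatchable}, plus the elementary observation that $K_{p,p}$ contains an induced claw as soon as $p\ge 3$.
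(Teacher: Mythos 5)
Your proposal is correct and follows essentially the same route as the paper: apply Lemma \ref{lem:ClawFreeAdmitsPerfectMatching} to get a perfect matching, deduce random matchability from equimatchability, invoke Lemma \ref{lem:RandomlyMatchable}, and use claw-freeness to eliminate $K_{p,p}$ for $p\ge 3$. The only difference is that you spell out the two steps the paper leaves implicit (why $G$ is randomly matchable, and why $K_{p,p}$ with $p\ge 3$ contains a claw), which is fine.
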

\begin{proof}
The graphs $K_{2p}$ and $C_4$ are clearly equimatchable and claw-free. Conversely, let $G$ be a connected equimatchable claw-free even graph. By Lemma \ref{lem:ClawFreeAdmitsPerfectMatching}, $G$ admits a perfect matching. Therefore, $G$ is randomly matchable. By Lemma \ref{lem:RandomlyMatchable}, $G$ is either a $K_{p,p}$ or a $K_{2p}$ for some $p \geq 1$. Since $G$ is a claw-free graph, it is a $K_{2p}$ or a $C_4$.
\end{proof}

\begin{lemma}\label{lem:EibenAlphaTwo}
Every odd graph $G$ with $\alpha(G)=2$ is equimatchable and claw-free.
\end{lemma}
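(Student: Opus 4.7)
The plan is to handle the two assertions in sequence, both by very short independence-set arguments.

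For claw-freeness, I would observe that the three leaves of a $K_{1,3}$ form an independent set of size $3$. Hence any graph containing an induced claw must satisfy $\alpha(G)\geq 3$. Contrapositively, $\alpha(G)=2$ forces $G$ to be claw-free.

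For equimatchability, I would use the standard reformulation recalled in the preliminaries: a matching $M$ is maximal in $G$ if and only if the set $U := V(G)\setminus V(M)$ of $M$-exposed vertices is independent. From $\alpha(G)=2$ we therefore get $|U|\leq 2$. On the other hand, $|V(M)|=2|M|$ is even while $n=|V(G)|$ is odd, so $|U|=n-2|M|$ is odd. The only odd value in $\{0,1,2\}$ is $1$, hence $|U|=1$ for every maximal matching $M$, which forces $|M|=(n-1)/2$ independently of the choice of $M$. Thus all maximal matchings have the same size and $G$ is equimatchable.

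There is no real obstacle here; the only thing to be careful about is the parity argument, namely that $n$ odd together with $\alpha(G)\leq 2$ pins the number of exposed vertices down to exactly one, ruling out both a perfect matching (which would require $n$ even) and the possibility of leaving two independent vertices uncovered (which would give $|U|=2$, of the wrong parity).
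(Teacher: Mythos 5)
Your proof is correct and follows essentially the same route as the paper: claw-freeness via the independent set of three leaves, and equimatchability by noting that the exposed vertices of a maximal matching form an independent set, so parity together with $\alpha(G)=2$ pins their number to exactly one. The paper phrases the matching half as ``fewer than $(n-1)/2$ edges would leave an independent set of at least $3$ exposed vertices,'' which is the same argument in contrapositive form.
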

\begin{proof}
Every matching of $G$ has at most $(n-1)/2$ edges since $n$ is odd. On the other hand, a maximal matching with less than $(n-1)/2$ edges implies an independent set with at least 3 vertices, a contradiction. Then every maximal matching has exactly $(n-1)/2$ edges. The graph $G$ is clearly claw-free because a claw contains an independent set with 3 vertices.
\end{proof}

Thus, from here onwards, we focus on the case where $G$ is odd and $\alpha(G) \geq 3$. The following lemmas provide the main tools to obtain our characterization in Section \ref{sec:OddOrder} and enable us to confine the rest of this study to the cases with connectivity at most 3.

\begin{lemma}\label{lem:isolating}
Let $G$ be a connected equimatchable claw-free odd graph and $M$ be a matching of $G$. Then the following hold:
\begin{enumerate}[i)]
\item \label{itm:EveryMMLeavesOneExposed} Every maximal matching of $G$ leaves exactly one vertex exposed.

\item \label{itm:OneOddEqimatchableComponent} The subgraph $G\setminus V(M)$ contains exactly one odd connected component and this component is equimatchable.

\item \label{itm:EvenComponentsRandomlyMatchable} The even connected components of $G\setminus V(M)$ are randomly matchable.
\end{enumerate}
\end{lemma}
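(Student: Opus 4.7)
The plan is to first prove (i), then establish a general extension technique that immediately yields (ii) and (iii). For (i), I would pick a non-cut vertex $v$ of $G$ (such a vertex exists since every connected graph on at least two vertices has at least two non-cut vertices; the case $n=1$ is vacuous). Then $G-v$ is a connected claw-free graph of even order, so Lemma \ref{lem:ClawFreeAdmitsPerfectMatching} gives it a perfect matching $M_0$. Viewed as a matching of $G$, $M_0$ saturates every neighbor of $v$, hence is maximal in $G$ and leaves exactly $v$ exposed. Since $G$ is equimatchable, every maximal matching has the same size as $M_0$, and therefore leaves exactly one vertex exposed.

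The key tool for (ii) and (iii) is the following \emph{extension technique}. Given any matching $M$ of $G$, set $H := G \setminus V(M)$ and, in every connected component of $H$, pick a matching; denote by $M^*$ the union of $M$ with all these component-matchings. If, within each component, the chosen matching is maximal, then $M^*$ is maximal in $G$: every vertex in $V(M)$ is already saturated, within each component the remaining exposed vertices form an independent set by local maximality, and vertices lying in distinct components of $H$ are non-adjacent in $G$ because $H$ is an induced subgraph. Hence by (i), the component-matchings collectively expose exactly one vertex.

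To derive (ii), I instantiate the extension by choosing a perfect matching in every even component of $H$ (which exists by Lemma \ref{lem:ClawFreeAdmitsPerfectMatching}) and an arbitrary maximal matching in every odd component. Each odd component then contributes at least one exposed vertex, but the total must be $1$; combined with the parity constraint that $H$ has an odd number of odd components (since $\abs{V(H)}$ is odd), this forces exactly one odd component $C$. For the equimatchability of $C$, any two maximal matchings $M_1, M_2$ of $C$ plug into the same scheme to produce two maximal matchings of $G$, each leaving exactly one vertex of $C$ exposed, so $\abs{M_1} = \abs{M_2} = (\abs{V(C)}-1)/2$.

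For (iii), take an even component $C$ of $H$ and an arbitrary matching $N$ of $C$, and extend $N$ to a maximal matching $N'$ of $C$. Plugging $N'$ into the extension (perfect matchings in the other even components and a maximal matching in the unique odd component, which exposes at least one vertex there) yields a maximal matching of $G$ whose total number of exposed vertices is at least one plus the number $N'$ exposes in $C$; by (i) this total is $1$, so $N'$ is a perfect matching of $C$ and $C$ is randomly matchable. The only real obstacle is verifying that the extension $M^*$ is genuinely maximal in $G$, which crucially uses that $H$ is \emph{induced} so that vertices in distinct components of $H$ remain non-adjacent in $G$; once that is observed, the rest is a short bookkeeping argument driven by (i).
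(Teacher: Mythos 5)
Your proposal is correct and follows essentially the same route as the paper: part (i) via a non-cut vertex and the perfect matching of $G-v$ guaranteed by Lemma \ref{lem:ClawFreeAdmitsPerfectMatching}, and parts (ii) and (iii) by extending $M$ with maximal matchings chosen inside the components of $G\setminus V(M)$ and counting exposed vertices against the bound of one from (i). Your ``extension technique'' just makes explicit the maximal extension that the paper invokes implicitly, so there is nothing substantively different to compare.
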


\begin{proof}
\begin{enumerate}[i)]
\item Let $v$ be a non-cut vertex of $G$ (every graph has such a vertex). Then $G - v$ is a connected claw-free even graph, which by Lemma \ref{lem:ClawFreeAdmitsPerfectMatching} admits a perfect matching with size $(n-1)/2$. This matching is clearly a maximum matching of $G$ that leaves exactly one vertex exposed. Since $G$ is equimatchable, every maximal matching of $G$ leaves exactly one vertex exposed.

\item Since $G$ is odd and $V(M)$ has an even number of vertices, $G \setminus V(M)$ contains at least one
odd component. If $G \setminus V (M)$ contains two odd components, then every maximal matching
extending $M$ leaves at least two exposed vertices, contradicting \itemref{EveryMMLeavesOneExposed}. Let $G_1$ be the unique odd component of $G \setminus V(M)$. Assume for a contradiction that some maximal matching $M_1$ of $G_1$ leaves at least three exposed vertices. Then any
maximal matching of $G$ extending $M \cup M_1$ leaves at least three exposed vertices, contradicting
\itemref{EveryMMLeavesOneExposed}. Therefore, every maximal matching of $G_1$ leaves exactly one vertex exposed; i.e., $G_1$ is equimatchable.

\item Let $G_i$ be an even component of $G \setminus V(M)$. Assume for a contradiction that there is a maximal matching $M_i$ of $G_i$ leaving at least two exposed vertices. Then any maximal matching of $G$ extending $M \cup M_i$ leaves at least two exposed vertices, contradicting \itemref{EveryMMLeavesOneExposed}.
\end{enumerate}
\end{proof}

\begin{lemma}\label{lem:notequiIS}
Let $G$ be a connected claw-free odd graph. Then $G$ is equimatchable if and only if for every independent set $I$ of 3 vertices, $G \setminus I$ has at least two odd connected components.
\end{lemma}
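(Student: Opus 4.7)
The plan is to prove both directions by contradiction using Lemma \ref{lem:ClawFreeAdmitsPerfectMatching} and Lemma \ref{lem:isolating}.

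For the forward direction, assume $G$ is equimatchable yet some independent triple $I$ makes $G \setminus I$ have at most one odd component. Since $|V(G) \setminus I| = n - 3$ is even, a parity check forces the number of odd components to be zero, so every component of $G \setminus I$ is a connected claw-free even graph and, by Lemma \ref{lem:ClawFreeAdmitsPerfectMatching}, admits a perfect matching. The union $M$ of these matchings satisfies $V(G) \setminus V(M) = I$; since $I$ is independent, $M$ is maximal yet leaves three vertices exposed, contradicting Lemma \ref{lem:isolating} \itemref{EveryMMLeavesOneExposed}.

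For the converse I argue the contrapositive. Suppose $G$ is not equimatchable. Applying Lemma \ref{lem:ClawFreeAdmitsPerfectMatching} to $G$ minus a non-cut vertex shows the maximum matching has size $(n-1)/2$, so there must exist a maximal matching with an odd number (and hence at least three) of exposed vertices. I choose $M$ to be such a maximal matching minimizing $|U| := |V(G) \setminus V(M)|$, and aim to show $|U| = 3$. If instead $|U| \geq 5$, then $|M| < (n-1)/2$, and Berge's theorem yields an $M$-augmenting path $P$ with endpoints $u_0, u_1 \in U$. The matching $M' = M \triangle P$ has exposed set $U \setminus \{u_0, u_1\}$, a subset of the independent set $U$; this is the key point that makes $M'$ automatically maximal, with no further local analysis required. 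Then $|U'| = |U| - 2 \geq 3$ contradicts the minimality of $|U|$, so $|U| = 3$. Taking $I := U$ gives an independent set of size 3 on which $M$ becomes a perfect matching of $G \setminus I$, and since $M$-edges cannot cross components, every component is even, contradicting the hypothesis.

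I expect the only delicate point to be the maximality of the augmented matching $M'$ above; once that observation is in hand, the minimality of $|U|$ forces the desired reduction to $|U| = 3$ in a single step rather than requiring an iterative argument that decreases the number of exposed vertices by two at a time.
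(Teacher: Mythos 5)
Your proof is correct and follows essentially the same route as the paper's: the forward direction builds the same maximal matching of size $(n-3)/2$ from perfect matchings of the even components (you invoke Lemma \ref{lem:isolating} \itemref{EveryMMLeavesOneExposed} where the paper cites equimatchability directly, but that lemma is proved from the same $(n-1)/2$ matching), and the converse rests on the identical key observation that augmenting along an $M$-augmenting path preserves maximality because the new exposed set is a subset of the old independent one. The only cosmetic difference is that you phrase the reduction to three exposed vertices as a minimality argument rather than the paper's explicit iteration.
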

\begin{proof}
As in the proof of Lemma \ref{lem:isolating} \itemref{EveryMMLeavesOneExposed}, picking up a non-cut vertex $v$ of $G$, the perfect matching of $G-v$ is a matching of $G$ with $(n-1)/2$ edges.

($\Rightarrow$) Assume that $G$ is equimatchable, and let $I$ be an independent set of $G$ with $3$ vertices. Suppose, for a contradiction, that all connected components of $G \setminus I$ are even. Thus, every such connected component admits a perfect matching by Lemma \ref{lem:ClawFreeAdmitsPerfectMatching}. The union of all these matchings is a maximal matching of $G$ with size $(n-3)/2$, contradicting the equimatchability of $G$. Then $G \setminus I$ has at least one odd component. The claim follows from parity considerations.

($\Leftarrow$) Assume that $G$ is not equimatchable. Then $G$ has a maximal matching $M$ of size $(n-3)/2$ by the following fact. Consider any maximal matching $M'$ of $G$ with size $(n-\ell)/2$ for some $\ell \geq 3$. If $\ell \ge 4$ find an $M'$-augmenting path and increase $M'$ along this augmenting path. Indeed, the new matching $M''$ obtained in this way is still maximal (the set of vertices exposed by $M''$ is a subset of vertices exposed by $M'$) and contains one more edge. We repeat this procedure until the matching reaches size $(n-3)/2$. Then $I= G \setminus V(M)$ is an independent set with size 3 and $G \setminus I$ has a perfect matching, namely $M$. This implies that every connected component of $G \setminus I$ is even.
\end{proof}

\begin{corollary}\label{cor:4conn}
If $G$ is an equimatchable claw-free odd graph with $\alpha(G) \geq 3$, then $\kappa(G) \leq 3$.
\end{corollary}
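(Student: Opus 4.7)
The plan is to derive the corollary directly from Lemma \ref{lem:notequiIS}, which essentially does all the work: the statement of the corollary is almost a contrapositive reformulation of that lemma together with the existence of a size-$3$ independent set.

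First I would reduce to the connected case: if $G$ is disconnected then $\kappa(G)=0 \le 3$ and we are done, so we may assume $G$ is a connected equimatchable claw-free odd graph with $\alpha(G)\ge 3$. Since $\alpha(G)\ge 3$, there exists an independent set $I\subseteq V(G)$ of size exactly $3$ (take any independent set of size $\ge 3$ and, if necessary, restrict it to three of its vertices; independence is preserved under taking subsets).

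Next I would apply Lemma \ref{lem:notequiIS} to this $I$. Since $G$ is equimatchable, the lemma guarantees that $G\setminus I$ has at least two odd connected components, so in particular $G\setminus I$ is disconnected. This means $I$ is a cut set of $G$ of size $3$, and therefore $\kappa(G)\le |I|=3$.

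There is no real obstacle here; the only thing to double-check is that Lemma \ref{lem:notequiIS} requires $I$ to be an independent set of \emph{size exactly} $3$, which is why the reduction step above restricts to a three-element subset of a witness to $\alpha(G)\ge 3$. Once this is in place, the corollary is immediate.
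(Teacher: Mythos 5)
Your proof is correct and matches the paper's argument in substance: both hinge on taking a three-vertex independent set $I$ (which exists since $\alpha(G)\ge 3$) and invoking Lemma \ref{lem:notequiIS} to conclude that $G\setminus I$ is disconnected, hence $I$ is a 3-element cut set. The only cosmetic difference is that the paper phrases this as a contradiction with $\kappa(G)\ge 4$, whereas you argue directly; this is not a meaningful divergence.
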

\begin{proof}
Let $I$ be an independent set of $G$ with three vertices, and assume for a contradiction that $\kappa(G)\geq 4$. Then $G \setminus I$ is connected and even, contradicting Lemma \ref{lem:notequiIS}.
\end{proof}

\section{Equimatchable Claw-Free Odd Graphs with $\alpha(G)\ge 3$}\label{sec:OddOrder}
Let $G$ be a connected equimatchable claw-free odd graph with $\alpha(G) \geq 3$. By Corollary \ref{cor:4conn}, $\kappa(G)\leq 3$. Since $\alpha(G) \geq 3$, $G$ contains independent sets $I$ of three vertices, each of which is a 3-cut by Lemma \ref{lem:notequiIS}. If $\kappa(G)=3$, then every such $I$ is a minimal cut set. In Section \ref{sec:3conn} (see Lemma \ref{lem:kappa3}) we show that the other direction also holds; i.e.\ if every such $I$ is a minimal cut set, then $\kappa(G)=3$. Therefore, if $\kappa(G)=2$, at least one independent 3-cut $I$ is not minimal; i.e.\ $G$ contains two non-adjacent vertices forming a cut set (we will call this cut set a strongly independent 2-cut). We analyze this case in Section \ref{sec:2conn}. Finally, we analyze the case $\kappa(G)=1$ in Section \ref{sec:1conn}.

In each subsection we describe the related graph families. Although we will use their full descriptions in the proofs, we also introduce the following notation  for a more compact description that will be useful in the illustrations of Figure \ref{fig:graphfamilies} and in the recognition algorithm given in Corollary \ref{cor:recognition}. Let $H$ be a graph on $k$ vertices $v_1,v_2,\ldots , v_k$ and let $n_1,n_2,\ldots , n_k$ be non-negative integers denoting the \textit{multiplicities} of the corresponding vertices. Then $H(n_1,n_2,\ldots , n_k)$ denotes the graph obtained from $H$ by repeatedly replacing each vertex $v_i$ with a clique of $n_i\geq 0$ vertices, each of which having the same neighborhood as $v_i$; i.e.\ each vertex in such a clique is a \textit{twin} of $v_i$. Clearly, $H=H(1,\ldots,1)$ where all multiplicities are 1. Note that if $n_i=0$ for some $i$, this means that the vertex $v_i$ is deleted.

The following observations will be useful in our proofs.

\begin{lemma}\label{lem:ClawFreeCutSets}
Let $G$ be a connected claw-free graph, $S$ be a minimal cut set of $G$, $C$ be an induced cycle of $G \setminus S$ with at least $4$ vertices, and $K$ be a clique of $G \setminus S$. Then
\begin{enumerate}[i)]
\item{} \label{itm:VertexAdjacentToAtmostTwoComponents} $G \setminus S$ consists of exactly two connected components, and every vertex of $S$ has a neighbour in both of them.
\item{} \label{itm:NeighborhoodOfConnectingVertexIsTwoCliques} The set $N_{G_i}(s)$ is a clique for every vertex $s \in S$ and every connected component $G_i$ of $G \setminus S$.
\item{} The neighborhood of every vertex of $S$ in $C$ is either empty or consists of exactly two adjacent vertices of $C$.
\item{} \label{itm:NeighborhoodOfCutSetInAClique} If $s_1$ and $s_2$ are two non-adjacent vertices of $S$, then $N_K(s_1) \cap N_K(s_2) = \emptyset$ or $N_K(s_1) \cup N_K(s_2) = K$.
\end{enumerate}
\end{lemma}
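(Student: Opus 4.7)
The plan is to handle the four assertions one by one, each time converting claw-freeness of $G$ into the non-existence of a star $K_{1,3}$ whose centre is adjacent to an independent triple. All four items share a common template: identify a candidate centre $c$, and exhibit three pairwise non-adjacent neighbours of $c$ whose existence would be forced if the stated conclusion failed.

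For \itemref{VertexAdjacentToAtmostTwoComponents}, the lower bound "at least two components" is immediate from the minimality of $S$: if some $s \in S$ had neighbours in only one component of $G \setminus S$, then $S - s$ would still disconnect $G$, contradicting minimality. For the upper bound, I would assume $s$ has neighbours $u_1, u_2, u_3$ in three distinct components of $G \setminus S$; these three vertices are pairwise non-adjacent in $G$, so with $s$ as centre they induce a claw. Item \itemref{NeighborhoodOfConnectingVertexIsTwoCliques} follows from the same template: if $u, v$ are non-adjacent vertices in $N_{G_i}(s)$, then together with any neighbour $w$ of $s$ in a different component (which exists by \itemref{VertexAdjacentToAtmostTwoComponents}) they form an independent triple adjacent to $s$.

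For (iii), let $G_i$ be the component of $G \setminus S$ containing $C$. Then $N_C(s) \subseteq N_{G_i}(s)$ is a clique by \itemref{NeighborhoodOfConnectingVertexIsTwoCliques}, and an induced cycle on at least $4$ vertices contains only cliques of size at most two, which moreover must consist of two consecutive cycle-vertices. It remains to rule out the case $\abs{N_C(s)} = 1$: if $N_C(s) = \set{u}$ and $v_1, v_2$ are the two cycle-neighbours of $u$ on $C$, then $v_1 v_2 \notin E(G)$ because $C$ is induced of length at least $4$, and neither $v_1$ nor $v_2$ is adjacent to $s$; hence $\set{v_1, v_2, s}$ is independent and induces a claw centred at $u$. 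For \itemref{NeighborhoodOfCutSetInAClique}, assuming both $N_K(s_1) \cap N_K(s_2)$ and $K \setminus (N_K(s_1) \cup N_K(s_2))$ are non-empty, I would pick $v$ in the intersection and $w$ outside the union. Then $v$ is adjacent to each of $s_1, s_2, w$ (to $s_1, s_2$ by the choice of $v$, and to $w$ because both lie in the clique $K$), while $\set{s_1, s_2, w}$ is independent ($s_1 s_2$ by hypothesis, $s_i w$ by the choice of $w$), giving a claw at $v$.

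The only mildly delicate point is the exclusion of $\abs{N_C(s)} = 1$ in (iii): there the centre of the forbidden claw must be the cycle-vertex $u$ rather than the cut-vertex $s$, so the right choice of independent triple uses $s$ itself as one of the three leaves. Once this observation is in hand, every one of the four assertions reduces to the same uniform "centre plus independent triple" recipe, and no further structural input beyond claw-freeness, minimality of $S$, and the already-proved earlier items of the lemma is needed.
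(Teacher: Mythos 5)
Your proposal is correct and follows essentially the same route as the paper's proof: each item is established by exhibiting a claw (the paper's item (iii) handles the $\abs{N_C(s)}=1$ case with exactly your claw centred at the cycle-vertex, and uses item (ii) for the "three neighbours" case just as you do). No gaps.
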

\begin{proof}
\begin{enumerate}[i)]
\item{} By the minimality of $S$, every vertex $s \in S$ is adjacent to at least two components of $G \setminus S$. Assume for a contradiction that a vertex $s \in S$ is adjacent to three connected components of $G \setminus S$. Then, $s$ together with one arbitrary vertex adjacent to it from each component form a claw, contradiction. Therefore, every vertex $s \in S$ is adjacent to exactly two components of $G \setminus S$. Furthermore, by the minimality of $S$, every component is adjacent to every vertex of $S$. Therefore $G \setminus S$ consists of exactly two connected components.
\item{} \label{itm:NeighborhoodIsAClique} Let $s \in S$, and $G_1, G_2$ be the two connected components of $G \setminus S$. Assume that the claim is not correct. Then, without loss of generality, there are two non-adjacent vertices $w,w' \in N_{G_1}[s]$. Then $s,w,w'$ together with an arbitrary vertex of $N_{G_2}(s)$ form a claw, contradiction.
\item{} \label{itm:TwoAdjacentNeigborsInCycle} Let $s \in S$ be adjacent to a vertex $v$ of $C$. If $s$ is adjacent to none of the two neighbors of $v$ in $C$, then $v$, $s$, and the two neighbors $v$ in $C$ form a claw, contradiction. If $s$ has three neighbors in $C$, then its neighborhood in the connected component of $C$ is not a clique, contradicting \itemref{NeighborhoodIsAClique}.
\item{} Assume for a contradiction that $N_K(s_1) \cap N_K(s_2) \neq \emptyset$ and $N_K(s_1) \cup N_K(s_2) \subset K$. Let $c \in N_K(s_1) \cap N_K(s_2)$ and $a \in K \setminus N_K(s_1) \cup N_K(s_2)$. Then $\set{s_1,s_2,a,c}$ induces a claw, contradiction.
\end{enumerate}
\end{proof}

\subsection{Equimatchable Claw-Free Odd Graphs with $\alpha(G)\geq 3$ and $\kappa(G)=3$}\label{sec:3conn}
In this section we show that the class of claw-free equimatchable odd graphs with independence number at least 3 and connectivity 3 is equivalent to the following graph class.

\begin{definition}\label{defn:GraphFamilyThreeConn}
Graph $G \in \cg_3$ if it has an independent 3-cut $S=\set{s_1,s_2,s_3}$ such that
\begin{enumerate}[i)]
\item \label{itm:G31} The subgraph $G \setminus S$ consists of two connected components $A$ and $A'$, each of which is an odd clique of at least three vertices
\item \label{itm:G32} there exist two vertices $a \in A$, $a' \in A'$ such that
\begin{itemize}
\item $N(s_1) = A + a'$,
\item $N(s_2) = A' + a$, and
\item $N(s_3) = A \cup A' \setminus \set{a, a'}$.
\end{itemize}
\end{enumerate}
\end{definition}

We note that
\begin{eqnarray*}
\cg_3 = \set{G_3(1,2p,1,1,1,2q,1)|~p,q \geq 1}
\end{eqnarray*}
where $G_3$ is the graph depicted in Figure \ref{subfig:G3}.

\begin{proposition}\label{prop:3ConnectedOnlyIf}
If $G \in \cg_3$, then $G$ is a
connected equimatchable claw-free odd graph with $\alpha(G) = \kappa(G) = 3$.
\end{proposition}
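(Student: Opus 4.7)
The plan is to verify the six claimed properties of $G$ --- odd order, connected, claw-free, $\alpha(G)=3$, $\kappa(G)=3$, and equimatchable --- one by one, with most of the work going into the equimatchability check.

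First, $|V(G)| = |A|+|A'|+3$ is odd since $|A|$ and $|A'|$ are odd, and $G$ is connected because $A, A'$ are cliques and each $s_i$ has neighbors in both $A$ and $A'$. For claw-freeness I would do a case analysis on the center of a hypothetical claw. For $s_1$, $N(s_1) = A + a'$ is a clique together with the single non-neighbor $a'$, so $\alpha(N(s_1)) = 2$; by symmetry $\alpha(N(s_2)) = 2$. For $s_3$, $N(s_3) = (A \setminus \{a\}) \cup (A' \setminus \{a'\})$ is a disjoint union of two cliques with independence number $2$. For $v \in A$ we have $N(v) = (A \setminus \{v\}) \cup \{s_i, s_j\}$, where both $s_i, s_j$ are adjacent to every vertex of $A \setminus \{v\}$, so no triple of pairwise non-adjacent vertices exists in $N(v)$; the case $v \in A'$ is symmetric.

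Next I would pin down $\alpha(G)$ and $\kappa(G)$. Since $A$ and $A'$ are cliques, every independent set contains at most one vertex from each; a short case analysis on which $s_i$'s can coexist with $A$- and $A'$-vertices shows that the only independent sets of size $3$ in $G$ are $S$ and $I^{\star} := \{a, a', s_3\}$, whence $\alpha(G) = 3$. The set $S$ is a $3$-cut, so $\kappa(G) \leq 3$. For $\kappa(G) \geq 3$, I would check that $G \setminus T$ is connected for every $T \subseteq V(G)$ with $|T| \leq 2$. Using $|A|, |A'| \geq 3$ (the representation $G_3(1, 2p, 1, 1, 1, 2q, 1)$ has $p, q \geq 1$), the cliques $A \setminus T$ and $A' \setminus T$ remain nonempty, and at least one of $s_1, s_2, s_3$ survives; in each of the few subcases (depending on which $s_i$ remain and whether $\{a, a'\}$ is hit by $T$) some surviving $s_i$ has a neighbor on each side and acts as a bridge.

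Finally, equimatchability follows from Lemma \ref{lem:notequiIS} applied to the two independent $3$-sets identified above. Removing $S$ leaves the two odd cliques $A$ and $A'$. Removing $I^{\star}$ leaves $(A \setminus \{a\}) \cup (A' \setminus \{a'\}) \cup \{s_1, s_2\}$, which decomposes into the two subgraphs $(A \setminus \{a\}) \cup \{s_1\}$ and $(A' \setminus \{a'\}) \cup \{s_2\}$ of odd sizes $|A|$ and $|A'|$ respectively, with no edges between them (no $A$--$A'$ edges, and $s_1 \not\sim s_2$). So every independent $3$-set leaves at least two odd components upon removal, and Lemma \ref{lem:notequiIS} gives equimatchability. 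The main obstacle I expect is the exhaustive enumeration of independent $3$-sets, since equimatchability via Lemma \ref{lem:notequiIS} requires every such set to satisfy the two-odd-component condition and any missed case would break the argument; the remaining steps are straightforward local checks.
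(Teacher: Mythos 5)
Your proposal is correct and follows essentially the same route as the paper, which likewise just observes that $S$ and $\set{s_3,a,a'}$ are the only independent $3$-sets, checks that removing either leaves two odd components, invokes Lemma \ref{lem:notequiIS}, and declares the remaining properties easily verifiable. One small slip in your claw-freeness check: for $v=a$ the statement that \emph{both} $s_i,s_j$ are complete to $A\setminus\set{v}$ is false (here $\set{s_i,s_j}=\set{s_1,s_2}$ and $s_2$ has no neighbor in $A\setminus\set{a}$), but the conclusion survives because \emph{one} of the two is always complete to $A\setminus\set{v}$, which already blocks any independent triple.
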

\begin{proof}
The only independent sets of $G_3$ with 3 vertices are $S=\set{s_1, s_2, s_3}$ and $S'=\set{s_3,a,a'}$. Both $G \setminus S$ and $G \setminus S'$ have two odd components; hence, $G$ is equimatchable by Lemma \ref{lem:notequiIS}. All other properties are easily verifiable.
\end{proof}

The following lemma provides the general structure of the claw-free equimatchable odd graphs with $\alpha(G)\geq 3$ and $\kappa(G) \le 3$.
\begin{lemma}\label{lem:minimalCutSetWithThree}
Let $G$ be an equimatchable claw-free odd graph. If $S=\set{s_1,s_2,s_3}$ is a minimal independent cut set of $G$, then $G \setminus S$ consists of two odd cliques $A$ and $A'$, each of which has at least three vertices, and every vertex of $S$ has a neighbor in both $A$ and $A'$.
\end{lemma}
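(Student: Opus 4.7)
The plan is to establish, in order, that (a)~$G\setminus S$ has exactly two components, (b)~both are odd and each $s_i$ is adjacent to both, (c)~each component is a clique, and (d)~each component has at least three vertices.

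For (a) and (b), I would combine the minimality of $S$ with Lemma~\ref{lem:ClawFreeCutSets}~\itemref{VertexAdjacentToAtmostTwoComponents}, which says that each $s_i$ is adjacent to at most two connected components of $G\setminus S$. Minimality of $S$ means $G\setminus (S-s_i)$ is connected for every $i$; since $s_i$ is then the only vertex of $S$ present, it must have an edge to every component of $G\setminus S$, for otherwise some component would remain isolated from the rest. Combined with the above lemma this forces $G\setminus S$ to have at most two components. Lemma~\ref{lem:notequiIS} supplies at least two odd components, so $G\setminus S$ has exactly two components, both odd; I call them $A$ and $A'$, and each $s_i$ is necessarily adjacent to both.

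Step (c) is the main obstacle. I argue by contradiction: suppose $A$ contains two non-adjacent vertices $u$ and $w$. Lemma~\ref{lem:ClawFreeCutSets}~\itemref{NeighborhoodOfConnectingVertexIsTwoCliques} forbids any $s_i$ from being adjacent to both $u$ and $w$, since $N_A(s_i)$ must be a clique. I split into two subcases. (i)~If some $s_i$ is adjacent to neither $u$ nor $w$, then $I=\{u,w,s_i\}$ is an independent set of size three, and I analyse $G\setminus I$: the remaining two vertices $s_j,s_k$ of $S$ are each adjacent to $A'$ and, in the main case, to $A\setminus\{u,w\}$, so everything merges into a single even component, giving at most one odd component in $G\setminus I$ and contradicting Lemma~\ref{lem:notequiIS}. (ii)~If every $s_i$ is adjacent to exactly one of $u,w$, pigeonhole yields two $s_i$'s, say $s_1,s_2$, adjacent to the same vertex, say $u$. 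I then pick matchings of the form $M=\{s_1 u, s_3 w\}$ (and symmetric variants with the roles of the $s_i$'s permuted) so that $A'\cup\{s_t\}$ becomes an even connected component of $G\setminus V(M)$; by Lemma~\ref{lem:isolating}~\itemref{EvenComponentsRandomlyMatchable} together with Lemma~\ref{lem:RandomlyMatchable}, and claw-freeness (which excludes $K_{p,p}$ with $p\ge 2$ and, via Lemma~\ref{lem:ClawFreeCutSets}~\itemref{NeighborhoodOfConnectingVertexIsTwoCliques}, also excludes $C_4$), each such component must be a $K_{2p}$. Hence $A'$ is a clique and each $s_t$ arising this way is complete to $A'$. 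Iterating across the three $s_i$'s makes all of them complete to $A'$, whereupon any vertex $v\in A'$ yields the induced claw $\{v;s_1,s_2,s_3\}$ and contradicts claw-freeness. The main technical subtlety lies in the cases when $A\setminus\{u,w\}$ is disconnected (so $\{u,w\}$ is a $2$-cut of $A$) or when the $A$-neighbourhoods of the $s_i$'s properly contain $\{u\}$; these are handled by finer matching choices and additional applications of Lemma~\ref{lem:ClawFreeCutSets}~\itemref{NeighborhoodOfCutSetInAClique}. The symmetric argument yields that $A'$ is a clique.

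Step (d) is immediate: since $|A|$ is odd, it suffices to rule out $|A|=1$. If $A=\{v\}$, then by (b) the vertex $v$ is adjacent to the mutually non-adjacent $s_1,s_2,s_3$, so $\{v,s_1,s_2,s_3\}$ induces a claw in $G$, contradicting claw-freeness; the same argument handles $A'$.
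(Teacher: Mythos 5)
Your steps (a), (b) and (d) are sound and coincide with the paper's argument: minimality of $S$ together with Lemma~\ref{lem:ClawFreeCutSets}~\itemref{VertexAdjacentToAtmostTwoComponents} gives exactly two components, each adjacent to every vertex of $S$; Lemma~\ref{lem:notequiIS} and parity make both odd; and a singleton component would form a claw with $S$.

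The gap is in step (c), which is the heart of the lemma. Both of your subcases rest on claims you do not establish and explicitly defer. In subcase (i) you need $G\setminus\set{u,w,s_i}$ to have no odd component, but after deleting \emph{two} vertices of $A$ the set $A\setminus\set{u,w}$ may split into several pieces, some of them odd and attached to neither $s_j$ nor $s_k$ (whose neighbourhoods in $A$ could be contained in $\set{u,w}$); your ``main case'' simply assumes this does not happen. In subcase (ii) the matching $M=\set{s_1u,s_3w}$ isolates $A'+s_2$ only if $N_A(s_2)\subseteq\set{u,w}$, which is not known, and the subsequent ``iteration'' over the three $s_i$'s cannot even be set up symmetrically, since $s_1$ and $s_2$ can both be matched only to $u$ among $\set{u,w}$. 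Conceding that ``the main technical subtlety'' is handled by unspecified ``finer matching choices'' leaves precisely the hard configurations unproved.

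The paper circumvents all of this with one idea your write-up is missing: choose the two non-adjacent vertices $v,v'$ to be \emph{non-cut} vertices of $G[A]$ (taking them from two distinct leaf blocks of the block tree when $G[A]$ is not biconnected). By Lemma~\ref{lem:ClawFreeCutSets}~\itemref{NeighborhoodOfConnectingVertexIsTwoCliques} no $s_i$ is adjacent to both, so by pigeonhole one of them, say $v$, is non-adjacent to two vertices of $S$, say $s_1,s_2$. The independent set is then $I=\set{s_1,s_2,v}$ --- two vertices of $S$ and only \emph{one} vertex of $A$ --- so $A-v$ remains connected, and $G\setminus I$ is either connected and even, or consists of the two even components $A-v$ and $A'+s_3$ (when $v$ is the unique $A$-neighbour of $s_3$); both outcomes contradict Lemma~\ref{lem:notequiIS}. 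I recommend reworking step (c) along these lines rather than patching the matching arguments.
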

\begin{proof}
By Lemma \ref{lem:ClawFreeCutSets} \itemref{VertexAdjacentToAtmostTwoComponents}, $G \setminus S$ consists of two components $A$ and $A'$.
By Lemma \ref{lem:notequiIS}, both $A$ and $A'$ are odd.
If one of $A$ of $A'$ consists of a single vertex, then this single vertex together with $S$ forms a claw. 
Therefore, each of $A$ and $A'$ has at least three vertices. It remains to show that both $A$ and $A'$ are cliques.

Let $v, v'$ be two vertices of $G[A]$ such that the distance between $v$ and $v'$ is as large as possible. 
If $v$ and $v'$ are adjacent, then $A$ is a clique. 
Now suppose that $vv' \notin E(G)$. 
We claim that neither $v$ nor $v'$ is a cut vertex of $G[A]$. 
Suppose that $G[A \setminus v]$ has two connected components, $B$ and $B'$. Without loss of generality, let $v'$ be in $B'$. 
Then every vertex $b$ in $B$ is further form $v'$ than $v$ is, since every path between $b$ and $v'$ contains $v$, a contradiction. 
Therefore, neither $v$ nor $v'$ is a cut vertex of $G[A]$, as claimed. 
At least one of $v,v'$ is adjacent to at most one vertex of $S$ because otherwise, by counting arguments, at least one vertex of $S$ is adjacent to both $v$ and $v'$, contradicting Lemma \ref{lem:ClawFreeCutSets} \itemref{NeighborhoodIsAClique}. Assume without loss of generality that $v$ is non-adjacent to $\set{s_1,s_2}$, and consider the independent set $I= \set{s_1, s_2, v}$. If $v$ is not the unique vertex of $A$ adjacent to $s_3$, then $G \setminus I$ is connected and even, contradicting Lemma \ref{lem:notequiIS}. Otherwise, $G \setminus I$ consists of two even components, again contradicting Lemma \ref{lem:notequiIS}. Therefore, $A$ is a clique, and by symmetry, so is $A'$.
\end{proof}

We note that Lemma \ref{lem:minimalCutSetWithThree} is a variant of the following result in the literature for the case $k=3$; indeed Lemma \ref{lem:minimalCutSetWithThree} is also valid for connectivity 1 and 2. This will enable us to replace the connectivity 3 condition with the existence of a minimal independent cut set of three vertices in what follows.
\begin{lemma}\label{lem:eiben3conn}
\cite{eiben2015equimatchable} Let $G$ be a $k$-connected equimatchable factor-critical graph with at least $2k+3$ vertices and a $k$-cut $S$ such that $G\setminus S$ has two components with at least $3$ vertices, where $k \ge 3$. Then $G\setminus S$ has exactly two components and both are complete graphs.
\end{lemma}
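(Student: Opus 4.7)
The plan is to address the two conclusions separately. Throughout, the key constraints I would lean on are (a) factor-criticality together with Tutte's theorem applied to $G - v$ for $v \in S$, which gives that the number of odd components of $G \setminus S$ is at most $k - 1$; and (b) equimatchability, which (since $G$ is factor-critical and hence of odd order) forces every maximal matching of $G$ to leave exactly one vertex exposed.

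For the first claim, that $G \setminus S$ has exactly two components, I would assume for contradiction that $G \setminus S = C_1 \cup C_2 \cup \cdots \cup C_m$ with $m \geq 3$ and $|C_1|, |C_2| \geq 3$, and then construct a maximal matching of size strictly less than $(n-1)/2$. The construction takes a maximum matching $M_i$ inside each $C_i$ (perfect if $|C_i|$ is even, otherwise leaving one vertex $x_i$ exposed), and extends $M := M_1 \cup \cdots \cup M_m$ greedily using edges inside $S$ together with edges between $S$ and the exposed $x_i$'s; when $|C_3|$ is even I would first perturb $M_3$ along an augmenting-type alternating path to expose a pair of vertices adjacent to $S$. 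Exploiting $C_3$ creates an extra demand on $S$ that the $k$ vertices cannot simultaneously service together with the demands from $C_1$ and $C_2$, and a careful count of how many $S$-vertices can be routed to each component, combined with $n \geq 2k+3$ (which leaves room to perturb matchings inside the two large components), would yield the contradiction.

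For the second claim, that each component is complete, I would suppose some component, say $C_1$, contains two non-adjacent vertices $u, v$. By factor-criticality the perfect matchings of $G - u$ and $G - v$ restrict to matchings of $C_1$ with controlled exposure patterns; I would use these to produce a matching that saturates $C_1 - \{u, v\}$, combine it with a near-perfect matching of $C_2$ (and of the other components, which by the first part reduce to just $C_2$) and with a maximum matching inside $S$, and extend maximally. Since $u$ and $v$ are non-adjacent, both must be matched through $S$ in any extension; pairing vertices of $S$ simultaneously with $u$, with $v$, and with the exposed vertex of $C_2$ is impossible, so some maximal extension leaves at least two vertices exposed, contradicting equimatchability.

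The main obstacle I expect is the first part. The Tutte bound $o(G \setminus S) \leq k - 1$ rules out too many \emph{odd} components but does not by itself exclude an additional \emph{even} component $C_3$; excluding it requires a matching construction that simultaneously leverages $k$-connectivity (every vertex of $S$ has neighbors in at least two components) and the size assumption $n \geq 2k + 3$. A clean proof is likely to proceed via explicit case analysis on the parities and sizes of the components, or via the Gallai--Edmonds decomposition of $G$, to convert the counting intuition into an actual maximal matching with two exposed vertices.
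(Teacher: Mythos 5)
First, a point of reference: the paper does not prove this statement at all --- it is quoted verbatim from \cite{eiben2015equimatchable} as background, and the authors instead prove their own claw-free analogue (Lemma \ref{lem:minimalCutSetWithThree}) for $k=3$, where claw-freeness does almost all the work via Lemma \ref{lem:ClawFreeCutSets} and Lemma \ref{lem:notequiIS}. So there is no in-paper proof to compare against, and your attempt must stand on its own as a proof of the general ($k\ge 3$, not necessarily claw-free) result. Your opening observations are sound: factor-criticality plus Tutte--Berge applied to $G-v$ with $T=S-v$ does give $o(G\setminus S)\le k-1$, and equimatchability of an odd factor-critical graph does force every maximal matching to miss exactly one vertex, so the strategy of exhibiting a maximal matching with two exposed vertices is the right one.

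However, both halves of the argument stop exactly where the difficulty begins. In the first half, ``extends $M$ greedily using edges inside $S$ and edges between $S$ and the exposed $x_i$'s'' and ``a careful count \ldots would yield the contradiction'' is the entire content of the claim: to get a contradiction you must produce a matching that is \emph{maximal} (its exposed set is independent) yet leaves two vertices exposed, which means you must saturate $S$ completely (or control which $S$-vertices survive) while leaving exposed vertices in two different components; whether this is possible depends delicately on the parities of the $|C_i|$, on how the demands of three or more components compete for the $k$ vertices of $S$, and on where $n\ge 2k+3$ enters --- none of which is carried out, and you yourself flag the even-third-component case as unresolved. In the second half there are two concrete gaps. (a) A matching of $C_1$ saturating $C_1\setminus\{u,v\}$ need not exist: restricting a perfect matching of $G-u$ to $C_1$ leaves exposed not only $u$ but every vertex of $C_1$ matched into $S$, and if $|C_1|$ is odd then $C_1\setminus\{u,v\}$ is odd and has no perfect matching at all, so the starting matching of your construction is not available. (b) The punchline ``pairing vertices of $S$ simultaneously with $u$, with $v$, and with the exposed vertex of $C_2$ is impossible'' is false as stated: $S$ has $k\ge 3$ vertices, so three distinct vertices of $S$ can a priori be matched to those three vertices; ruling this out requires a parity or counting argument (e.g., that doing so would saturate all $n$ vertices of an odd-order graph) that you have not supplied and that again depends on the parities of $|C_1|$, $|C_2|$, and $k$. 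As it stands the proposal is a plausible plan, not a proof; to complete it you would need either the explicit case analysis you mention at the end, or the auxiliary lemmas of \cite{eiben2015equimatchable} (which control when a component of $G\setminus V(M)$ must be randomly matchable, in the spirit of Lemma \ref{lem:isolating}) to shortcut the component-by-component bookkeeping.
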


\begin{proposition}\label{prop:3conn}
If $G$ is an equimatchable claw-free odd graph with $\alpha(G) \ge 3$ and it contains a minimal independent cut set $S=\set{s_1,s_2,s_3}$ with three vertices, then $G \in \cg_3$.
\end{proposition}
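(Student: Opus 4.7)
The plan is to begin with Lemma~\ref{lem:minimalCutSetWithThree}, which applied to $S=\{s_1,s_2,s_3\}$ gives that $G\setminus S$ consists of exactly two odd cliques $A$ and $A'$, each of size at least $3$, with every $s_i$ having a neighbor in both. For each $i\in\{1,2,3\}$ I would introduce the non-neighborhoods $A_i := A\setminus N_A(s_i)$ and $A'_i := A'\setminus N_{A'}(s_i)$; showing $G\in\cg_3$ then amounts to proving that, after relabeling the cut vertices, $|A_1|=0$, $|A_3|=1$, $|A_2|=|A|-1$, with $|A'_2|=0$, $|A'_3|=1$, $|A'_1|=|A'|-1$, and then setting $a\in A_3$ and $a'\in A'_3$ matches Definition~\ref{defn:GraphFamilyThreeConn}.

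The first substantive step is to show that $\{A_1,A_2,A_3\}$ partitions $A$ (and analogously $\{A'_1,A'_2,A'_3\}$ partitions $A'$). The union covers $A$ by claw-freeness, since any vertex of $A$ adjacent to all three $s_i$ together with them would induce a claw. For pairwise disjointness, if some $v\in A_i\cap A_j$ existed then $I:=\{s_i,s_j,v\}$ would be an independent triple; inspecting $G\setminus I=(A-v)\cup A'\cup\{s_k\}$, this is either one connected component of even order $n-3$ (when $s_k$ retains a neighbor in $A-v$) or splits into the two even components $A-v$ and $A'\cup\{s_k\}$. In both subcases $G\setminus I$ has no odd component, contradicting Lemma~\ref{lem:notequiIS}.

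The heart of the proof is pinning down the cardinalities. By Lemma~\ref{lem:minimalCutSetWithThree}, each $s_i$ has a neighbor in $A$ and in $A'$, so at most one $A_i$ and at most one $A'_j$ can be empty. For every index $i$ with both $A_i$ and $A'_i$ nonempty, picking $v\in A_i$ and $v'\in A'_i$ produces the independent triple $\{s_i,v,v'\}$; by disjointness $v\in N_A(s_m)$ and $v'\in N_{A'}(s_m)$ for each $m\in\{j,k\}$, so writing $L_m=1$ iff $|A_m|\leq |A|-2$ (equivalently, $s_m$ retains a neighbor in $A-v$) and $L'_m$ analogously, I would enumerate the components of $G\setminus\{s_i,v,v'\}$ over the sixteen possible quadruples $(L_j,L'_j,L_k,L'_k)$ against the ``at least two odd components'' requirement of Lemma~\ref{lem:notequiIS}. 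The resulting small catalogue of forbidden quadruples, combined with the partition identities $|A|=|A_1|+|A_2|+|A_3|$ and its analogue for $A'$, successively rules out the case where all $A_i$ and $A'_j$ are nonempty (all indicators are forced to $1$ for every $i$, giving the forbidden $(1,1,1,1)$), the case where the same index is empty on both sides (which again produces a forbidden quadruple at some $i$), and any remaining size profile outside the target, forcing precisely the adjacencies of Definition~\ref{defn:GraphFamilyThreeConn}.

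The main obstacle is the case analysis in this last step: while each individual independent-triple computation is routine, correctly cataloguing which link-indicator quadruples are inconsistent with Lemma~\ref{lem:notequiIS} and then combining them with the partition cardinalities to force a unique structure, including that the empty index on $A$ differs from the empty index on $A'$, requires careful bookkeeping.
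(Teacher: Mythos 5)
Your plan is correct and takes essentially the same route as the paper's proof: Lemma~\ref{lem:minimalCutSetWithThree} to get the two odd cliques, the observation that every vertex of $A\cup A'$ is adjacent to exactly two of $s_1,s_2,s_3$ (your $A_i$ is exactly the paper's $N_{j,k}=N_A(s_j)\cap N_A(s_k)$), and then independent triples of the form $\set{s_i,v,v'}$ analysed against Lemma~\ref{lem:notequiIS}. The only divergence is in one sub-case: the paper eliminates the configuration in which two of the pairs have singleton common neighbourhoods on both sides by exhibiting a three-edge matching and invoking Lemma~\ref{lem:isolating}~\ref{itm:OneOddEqimatchableComponent}), whereas your indicator enumeration combined with the parity of $\abs{A}$ and $\abs{A'}$ disposes of it using Lemma~\ref{lem:notequiIS} alone --- both work.
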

\begin{proof}
By Lemma \ref{lem:minimalCutSetWithThree}, Property \itemref{G31} of Definition \ref{defn:GraphFamilyThreeConn} holds. We proceed to show \itemref{G32}. Since $S$ is minimal, every vertex $s \in S$ is has a neigbour in each of $A$ and $A'$. Suppose that a connected component of $G \setminus S$, say $A$, has a vertex $v$ that is non-adjacent to two vertices, say $s_1, s_2$ of $S$. Then $I=\set{s_1,s_2,v}$ is an independent set with three vertices and $G \setminus I$ is either connected, or has two even components $A'+s_3$ and $A-v$ (when $N_A(s_3)=\{v\}$), contradicting Lemma \ref{lem:notequiIS}. Therefore, every vertex of $A \cup A'$ is adjacent to at least two vertices of $S$. As already observed, a vertex of $A \cup A'$ that is complete to $S$ implies a claw, contradiction. We conclude that every vertex of $A \cup A'$ is adjacent to exactly two vertices of $S$. For $i,j \in [3]$, let $N_{i,j}=N_A(s_i) \cap N_A(s_j)$ and $N'_{i,j}=N_{A'}(s_i) \cap N_{A'}(s_j)$. We have shown that $\set{N_{1,2}, N_{2,3}, N_{1,3}}$ (resp. $\set{N'_{1,2}, N'_{2,3}, N'_{1,3}}$) is a partition of $A$ (resp. $A'$).

Assume that for some pair $(i,j)$ none of $N_{i,j}, N'_{i,j}$ is empty, and let $k=6-i-j$. Consider the set $S'=\set{s_k, w_{ij}, w'_{ij}}$ where $w_{ij}$ and $w'_{ij}$ are arbitrary vertices of $N_{i,j}$ and $N'_{i,j}$, respectively. $S'$ is an independent set, and it is easy to verify that if one of $N_{i,j}$ and $N'_{i,j}$ is not a singleton, say $N_{i,j}$, then $G\setminus S'$ is connected; indeed, in this case, there exists a vertex $u_{ij}\in N_{i,j}-w_{ij}$. Moreover, either $N'_{i,k}$ or $N'_{j,k}$ is non-empty (since otherwise $s_k$ would not have a neighbor in $A'$, contradicting the minimality of $S$) implying that $G\setminus S'$ is connected. Therefore, for every pair $(i,j)$ either one of $N_{i,j}, N'_{i,j}$ is empty or both are singletons.

Suppose that for every pair $(i,j)$ one of $N_{i,j}, N'_{i,j}$ is empty. Then at least $3$ of the $6$ sets are empty, and two of them must be in the same component, say $A$. Suppose that, for instance $N_{1,2}=N_{1,3}=\emptyset$. Then $N_A(s_1)=N_{1,2} \cup N_{1,3}=\emptyset$, a contradiction. Therefore, for at least one pair $(i,j)$, both $N_{i,j}$ and $N'_{i,j}$ are singletons. We can renumber the vertices of $S$ such that $N_{1,2}=\set{a}$ and $ N'_{1,2}=\set{a'}$ are singletons. Now suppose that for some other pair, say $(2,3)$, $N_{2,3}=\set{w_{23}}$ and $N'_{2,3}=\set{w'_{23}}$ are singletons. Then the matching $\set{a' w'_{23},s_1 a,s_3 w_{23}}$ disconnects $G$ into three odd components, contradicting Lemma \ref{lem:isolating} \itemref{OneOddEqimatchableComponent}. Therefore, both pairs $(2,3)$ and $(1,3)$ fall into the other category, i.e.\ one of $N_{2,3},N'_{2,3}$ and one of $N_{1,3},N'_{1,3}$ is empty. Since two sets from the same component cannot be empty, we conclude that without loss of generality $N_{2,3}=N'_{1,3}=\emptyset$. In other words, $A=N_{1,3}+a$ and $A'=N'_{2,3}+a'$. Hence, property \itemref{G32} also holds.
\end{proof}

We conclude this section with the following summarizing lemma.
\begin{lemma}\label{lem:kappa3}
Let $G$ be an equimatchable claw-free odd graph with $\alpha(G) \geq 3$. Then the following conditions are equivalent:
\begin{enumerate}[i)]
\item {$\kappa(G)=3$,}\label{itm:lemkappa3one}
\item {every independent set $S$ of $G$ with three vertices is a minimal cut set,} \label{itm:lemkappa3two}
\item {$G \in \cg_3$.}\label{itm:lemkappa3three}
\end{enumerate}
\end{lemma}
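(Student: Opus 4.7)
The plan is to prove the equivalence by a cyclic chain \itemref{lemkappa3three} $\Rightarrow$ \itemref{lemkappa3one} $\Rightarrow$ \itemref{lemkappa3two} $\Rightarrow$ \itemref{lemkappa3three}, leveraging results already established in this section.

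For \itemref{lemkappa3three} $\Rightarrow$ \itemref{lemkappa3one}, I would simply cite Proposition \ref{prop:3ConnectedOnlyIf}, which explicitly states that every $G \in \cg_3$ satisfies $\kappa(G) = 3$.

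For \itemref{lemkappa3one} $\Rightarrow$ \itemref{lemkappa3two}, let $S$ be any independent set of $G$ with three vertices (such a set exists since $\alpha(G) \geq 3$, and what we must show is a statement about all such sets). By Lemma \ref{lem:notequiIS}, since $G$ is equimatchable, $G \setminus S$ has at least two odd connected components; in particular $G \setminus S$ is disconnected, so $S$ is a cut set. Since $\kappa(G) = 3 = |S|$, no proper subset of $S$ can be a cut set, so $S$ is a minimal (in fact minimum) cut set.

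For \itemref{lemkappa3two} $\Rightarrow$ \itemref{lemkappa3three}, the assumption $\alpha(G) \geq 3$ guarantees the existence of some independent set $S=\set{s_1,s_2,s_3}$ of size three in $G$, and by hypothesis this $S$ is a minimal cut set. We are then in the exact situation of Proposition \ref{prop:3conn}, whose conclusion is $G \in \cg_3$.

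The proof involves no real obstacles since all three implications reduce directly to results already proved (Propositions \ref{prop:3ConnectedOnlyIf} and \ref{prop:3conn}, and Lemma \ref{lem:notequiIS}); the only minor point requiring care is the observation that in \itemref{lemkappa3one} $\Rightarrow$ \itemref{lemkappa3two} we use $\kappa(G)=3$ to upgrade ``cut set of size $3$'' to ``minimal cut set'' automatically.
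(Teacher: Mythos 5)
Your proposal is correct and follows essentially the same cyclic chain as the paper (i $\Rightarrow$ ii via Lemma \ref{lem:notequiIS} plus $\kappa(G)=3$, ii $\Rightarrow$ iii via Proposition \ref{prop:3conn}, iii $\Rightarrow$ i). The only cosmetic difference is that for iii $\Rightarrow$ i you cite Proposition \ref{prop:3ConnectedOnlyIf} directly, whereas the paper re-verifies $\kappa(G)=3$ by inspecting the 2-cuts of $G_3$; both are valid.
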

\begin{proof}
\itemref{lemkappa3one} $\Rightarrow$ \itemref{lemkappa3two} Let $S$ be an independent set with three vertices. By Lemma \ref{lem:notequiIS}, $S$ is a cut set, and since $\kappa(G)=3$ it is a minimal cut set.

\itemref{lemkappa3two} $\Rightarrow$ \itemref{lemkappa3three} By Proposition \ref{prop:3conn}.

\itemref{lemkappa3three} $\Rightarrow$ \itemref{lemkappa3one} We observe that $G_3$ contains only one 2-cut set, namely $\set{v_2, v_6}$. Since the multiplicities of $v_2$ and $v_6$ are at least two, this set does not yield a $2$-cut of $G$.
\end{proof}

\subsection{Equimatchable Claw-Free Odd Graphs with  $\alpha(G)\geq 3$ and $\kappa(G)=2$}\label{sec:2conn}
Throughout this section, $G$ is an equimatchable claw-free odd graph with $\alpha(G) \geq 3$ and $\kappa(G)=2$, $I$ is an independent set with three vertices, and $S=\set{s_1,s_2}$ is a (minimal) cut set of $G$. Recall that, by Corollary \ref{coro:ClawFreeTwoConnectedIsFactorCritical}, $G$ is factor-critical, and note that since $G$ is connected and $\alpha(G) \geq 3$, we have $n \ge 4$.
Our starting point is the following result on 2-connected equimatchable factor-critical graphs.

\begin{lemma}\label{lem:favaron}
\cite{favaron1986equimatchable} Let $G$ be a $2$-connected, equimatchable factor-critical graph with at least 4 vertices and $S=\set{s_1,s_2}$ be a minimal cut set of $G$. Then $G \setminus S$ has precisely two components, one of them even and the other odd. Let $A_S$ and $B_S$ denote the even and odd components of $G \setminus S$, respectively. Let $a_1$ and $a_2$ be two distinct vertices of $A_S$ adjacent to $s_1$ and $s_2$, respectively, and, if $\abs{B_S}>1$, let $b_1$ and $b_2$ be two distinct vertices of $B_S$ adjacent to $s_1$ and $s_2$, respectively. Then the following hold:
\begin{enumerate}
\item The subgraph $B_S$ is one of the four graphs $K_{2p+1}$, $K_{2p+1}- b_1b_2$, $K_{p,p+1}$, $K_{p,p+1} + b_1b_2$ for some $p \geq 1$. In the last two cases, all neighbors of $S$ in $B_S$ belong to the larger part of the bipartition of $K_{p,p+1}$.
\item The subgraph $A_S \setminus \set{a_1, a_2}$ is connected and randomly matchable, and if $\abs{B_S}>1$, then $A_S$ is connected and randomly matchable.
\end{enumerate}
\end{lemma}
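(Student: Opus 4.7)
\smallskip
\noindent\emph{Proof proposal.}
The plan is to combine factor-criticality (via Tutte--Berge) with equimatchability (via maximal-matching extensions) to pin down the component structure of $G \setminus S$, and then to classify each of the two resulting components using randomly-matchable arguments.

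I would begin from the Tutte--Berge deficiency formula: since $G$ is factor-critical, $o(G \setminus T) - |T| \le 1$ for every vertex set $T$; with $T = S$ this gives $o(G \setminus S) \le 3$, and parity ($n - 2$ being odd) forces $o(G \setminus S) \in \set{1, 3}$. Ruling out the remaining bad configurations uses two levers in tandem. On one hand, the minimality of $S$ forces each $s_i$ to have a neighbor in every component of $G \setminus S$, so matchings of the form $\set{s_1 a, s_2 b}$ with $a, b$ in components of our choice can be built and extended to maximal matchings whose exposed-vertex count is dictated by the parities of the leftover component pieces. On the other hand, for $v$ chosen inside each component, the perfect matching of $G - v$ guaranteed by factor-criticality must saturate $s_1, s_2$ into components, again yielding component-parity constraints. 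Combining these two levers should eliminate both $o(G \setminus S) = 3$ and the case of more than one even component when $o(G \setminus S) = 1$, forcing $G \setminus S$ to consist of exactly one odd component $B_S$ and one even component $A_S$, each adjacent to both $s_1$ and $s_2$.

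For the structure of $A_S$, take any maximal matching $M'$ of $A_S \setminus \set{a_1, a_2}$. The union $M' \cup \set{s_1 a_1, s_2 a_2}$ together with any maximal matching of $B_S$ extends to a maximal matching of $G$, whose single exposed vertex must, by parity and $|B_S|$ odd, lie inside $B_S$. Hence $M'$ is perfect, so every maximal matching of $A_S \setminus \set{a_1, a_2}$ is perfect. Lemma \ref{lem:RandomlyMatchable} then identifies $A_S \setminus \set{a_1, a_2}$ with $K_{2p}$ or $K_{p,p}$ (in particular, it is connected and randomly matchable). When $|B_S| > 1$, replacing $\set{s_1 a_1, s_2 a_2}$ by $\set{s_1 b_1, s_2 b_2}$ in the same argument upgrades the conclusion to $A_S$ itself being randomly matchable.

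The main obstacle is classifying $B_S$. Here I would range $v$ over $A_S$ and study the perfect matching of $G - v$: parity forces exactly one of $s_1, s_2$ to be matched into $B_S$, producing a near-perfect matching of $B_S$ that exposes a single vertex. By varying $v$ and which $s_i$ is matched into $B_S$, every vertex of $B_S$ should be realizable as the exposed one, and by combining this with equimatchability extensions of matchings supported on $S \cup A_S$, one deduces that $B_S - b$, possibly after the removal or addition of a distinguished edge $b_1 b_2$, admits only perfect maximal matchings. Lemma \ref{lem:RandomlyMatchable} then yields $B_S - b \cong K_{2p}$ or $K_{p,p}$, and reinserting $b$ with the corresponding edge state produces exactly the four families $K_{2p+1}$, $K_{2p+1} - b_1 b_2$, $K_{p,p+1}$, and $K_{p,p+1} + b_1 b_2$. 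The final assertion that all neighbors of $S$ in $B_S$ lie in the larger partite set in the bipartite cases follows from a parity check: if a neighbor of $s_i$ in $B_S$ were in the smaller partite class of a $K_{p,p+1}$, matching $s_i$ to it would leave the bipartite remainder with unequal sides and thus no near-perfect matching, violating equimatchability.
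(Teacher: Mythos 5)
The paper does not prove this lemma: it is quoted from Favaron \cite{favaron1986equimatchable} and used as a black box, so there is no in-paper argument to compare yours against. Judged on its own, your proposal is a reasonable outline for the easier claims but not a proof of the main one. The component-count step is mostly workable (two even components are excluded because $\set{s_1a_1,s_2a_2}$ with $a_1,a_2$ in distinct even components leaves three odd components and hence at least three exposed vertices), but your phrase that the two levers ``should eliminate'' $o(G\setminus S)=3$ hides real work: with three odd components and only two cut vertices, parity alone is consistent with every maximal matching exposing one vertex, and you must actually construct a bad maximal matching (e.g.\ matching both of $s_1,s_2$ into the same component, with degenerate subcases when that component is a single vertex or the chosen neighbors coincide). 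Also, Lemma \ref{lem:RandomlyMatchable} classifies \emph{connected} randomly matchable graphs; your argument shows every maximal matching of $A_S\setminus\set{a_1,a_2}$ is perfect, but the connectedness asserted in the statement does not follow from that lemma and needs its own justification.

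The substantive gap is part 1. Your treatment of $B_S$ describes the desired conclusion rather than deriving it: ``one deduces that $B_S-b$, possibly after the removal or addition of a distinguished edge $b_1b_2$, admits only perfect maximal matchings'' is exactly the hard content of Favaron's theorem, and nothing in the sketch explains where the distinguished edge comes from, why the deviation from $K_{2p+1}$ or $K_{p,p+1}$ is limited to a single edge, or why that edge must join the specific vertices $b_1,b_2$ adjacent to $S$. The matching $\set{s_1b_1,s_2b_2}$ does show, by the analogue of Lemma \ref{lem:isolating}, that $B_S\setminus\set{b_1,b_2}$ is an odd equimatchable graph and that $A_S$ is randomly matchable, which is the right starting point; but converting that into the four-family classification requires a careful analysis of how $b_1$ and $b_2$ attach to the randomly matchable core, and that analysis is absent. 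The closing claim about the larger partite set is also too quick: after matching $s_i$ into the smaller part you must still exhibit a \emph{maximal} matching of everything that remains (including the other cut vertex and $A_S$) exposing at least two vertices, which takes some bookkeeping. If you need this statement, cite Favaron as the paper does, or be prepared to reconstruct her proof in full.
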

In the rest of this section, $A_S, B_S$ denote the even and odd connected components of $G \setminus S$, respectively, and $a_1, a_2 \in A_S$ and $b_1, b_2 \in B_S$ are as described in Lemma \ref{lem:favaron}.
We note that the vertices $a_1, a_2$ exists, since otherwise $A_S$ contains a cut vertex of $G$.
Similarly, if $|B_S| \geq 3$, the vertices $b_1$ and $b_2$ exist.
Moreover, let $A'_S=A_S \setminus \set{a_1,a_2}$, and $B'_S=B_S \setminus \set{b_1,b_2}$ whenever $\abs{B_S} > 1$ (see Figure \ref{fig:favaron}). 
The minimal cut set $S$ is \emph{independent} if $s_1 s_2 \notin E(G)$, and \emph{strongly independent} if there exists an independent set $I$ with three vertices including $S$.

\begin{figure}[t]
\begin{center}
\includegraphics{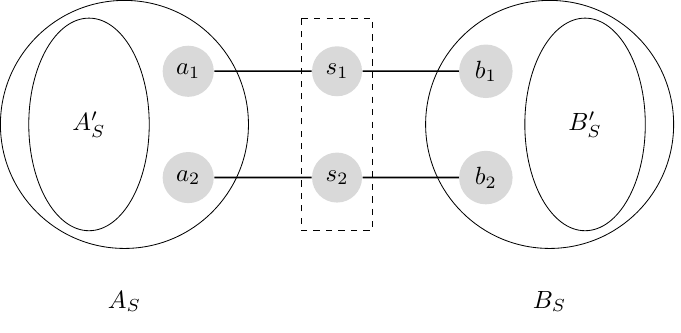}
\caption{The structure of $2$-connected equimatchable claw-free odd graphs by Lemma \ref{lem:favaron}.}\label{fig:favaron}
\end{center}
\end{figure}

An important consequence of Section \ref{sec:3conn} which will guide our proofs is the following:

\begin{corollary}\label{cor:stronglyind}(of Lemma \ref{lem:kappa3})
Let $G$ be an equimatchable claw-free odd graph with $\alpha(G)\geq 3$. If $\kappa(G)=2$, then it has a strongly independent 2-cut.
\end{corollary}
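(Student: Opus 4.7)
The strategy is to read the statement as a contrapositive of Lemma \ref{lem:kappa3}, specifically the implication \itemref{lemkappa3two} $\Rightarrow$ \itemref{lemkappa3one}. Since $\kappa(G)=2$, condition \itemref{lemkappa3one} of Lemma \ref{lem:kappa3} fails, and therefore condition \itemref{lemkappa3two} fails as well: there exists some independent set $I$ of three vertices of $G$ which is \emph{not} a minimal cut set.

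To turn this non-minimality into an actual 2-cut, I would next invoke the equimatchability of $G$ to guarantee that $I$ is a cut set in the first place. By Lemma \ref{lem:notequiIS}, $G \setminus I$ has at least two odd connected components, so $I$ is indeed a cut set. Combined with the previous step, $I$ is a non-minimal cut set, so there exists a proper subset $S \subsetneq I$ that is itself a cut set.

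Finally I would pin down the size of $S$. Because $\kappa(G)=2$, we have $|S|\geq 2$, while $|S| < |I| = 3$ forces $|S| = 2$. Since $S \subset I$ and $I$ is independent, the two vertices of $S$ are non-adjacent, so $S$ is an independent 2-cut. By construction $S$ is contained in the independent 3-set $I$, so $S$ is a strongly independent 2-cut in the sense defined just before the statement.

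There is no significant obstacle in this argument; the only point to handle with care is to explicitly invoke Lemma \ref{lem:notequiIS} to certify that the set $I$ supplied by the failure of condition \itemref{lemkappa3two} really is a cut set, which is what allows us to then locate the 2-element cut $S$ inside it.
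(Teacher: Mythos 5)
Your proposal is correct and follows essentially the same route as the paper: negate implication \itemref{lemkappa3two} $\Rightarrow$ \itemref{lemkappa3one} of Lemma \ref{lem:kappa3} to get an independent 3-set $I$ that is not a minimal cut set, certify via Lemma \ref{lem:notequiIS} that $I$ is a cut set, and extract a proper sub-cut-set, which must have exactly two (non-adjacent) vertices since $\kappa(G)=2$. The paper's proof is just a terser version of this; your explicit appeal to Lemma \ref{lem:notequiIS} to establish that $I$ is a cut set before using non-minimality is a point the paper leaves implicit.
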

\begin{proof}
Since $\kappa(G)=2$, by Lemma \ref{lem:kappa3}, there exists an independent 3-cut $I$ that is not a minimal cut set, i.e.\ $I$ contains a minimal 2-cut $S \subseteq I$. Moreover, since $S \subseteq I$, $S$ is strongly independent.
\end{proof}

The main result of this section is that $G$ is either a $C_7$ or in one of the following graph families. The reader is referred to Figures \ref{subfig:G21}, \ref{subfig:G22} and \ref{subfig:G23} for these definitions. 

\begin{definition}
A graph is in $\cg_{21}$ if its vertex set can be partitioned into $V_1$ and $V_2$ such that
\begin{enumerate}[i)]
\item \label{itm:G211} $V_1$ induces a $K_{2q+1}$ for some $q \geq 1$,
\item \label{itm:G212} $V_2$ induces a $C_4$, say $v_1 v_2 v_3 v_4$,
\item \label{itm:G213} $N_{V_1}(v_1)=N_{V_1}(v_2)$,
\item \label{itm:G214} $2 \le \abs{N_{V_1}(v_1)} < \abs{V_1}$, and
\item \label{itm:G215} $N_{V_1}(v_3)=N_{V_1}(v_4)=\emptyset$.
\end{enumerate}

A graph is in $\cg_{22}$ if it has an independent 2-cut $S=\set{s_1,s_2}$ such that
\begin{enumerate}[i)]
\item \label{itm:G221} $A_S$ is a $K_{2p}$ for some $p \geq 1$,
\item \label{itm:G223} $B_S$ is a $K_{2q+1}$ for some $q \geq 0$,
\item \label{itm:G222} $s_1$ and $s_2$ are complete to $B_S$,
\item \label{itm:G224} $N_{A_S}(s_1) \cup N_{A_S}(s_2) \subsetneq A_S$, and
\item \label{itm:G225} $N_{A_S}(s_1) \cap N_{A_S}(s_2) = \emptyset$.
\end{enumerate}

A graph is in $\cg_{23}$ if it has an independent 2-cut $S=\set{s_1,s_2}$ such that
\begin{enumerate}[i)]
\item \label{itm:G231} $A_S$ is a $K_2$,
\item \label{itm:G232} $G[S \cup A_S]$ is a $P_4$,
\item \label{itm:G233} $B_S$ is a $K_{2q+1}$ for some $q \geq 1$, and
\item \label{itm:G234} $N_{B_S}(s_1) \cup N_{B_S}(s_2) = B_S$, $N_{B_S}(s_1)\neq \emptyset, N_{B_S}(s_2) \neq \emptyset$, either $N_{B_S}(s_1) \neq B_S$ or $N_{B_S}(s_2) \neq B_S$.
\end{enumerate}
\end{definition}

We note that
\begin{eqnarray*}
\cg_{21} & = & \set{G_{21}(1,1,1,1, x, 2q+1-x)|~ 2\leq x \leq 2q}, \\
\cg_{22} & = & \set{G_{22}(2p-x-y, x, y, 1,1, 2q+1)|~ q\geq 0, x,y \geq 1, x+y \leq 2p-1}, \\
\cg_{23} & = & \set{G_{23}(1,1,1,1, x, y, 2q+1-x-y)|~ 1 \leq x+y\leq 2q+1}
\end{eqnarray*}
where $G_{21}, G_{22}, G_{23}$ are the graphs depicted in Figures \ref{subfig:G21}, \ref{subfig:G22} and \ref{subfig:G23}, respectively. 
It can be noticed that the vertices $s_1$ and $s_2$ are not identified in $G_{21}$ of Figure \ref{subfig:G21} since the vertices playing the roles of $s_1$ and $s_2$ will depend on the case under analysis for this family.

\begin{proposition}\label{prop:2ConnectedOnlyIf}
If $G \in \cg_{21} \cup \cg_{22} \cup \cg_{23} + C_7$, then $G$ is a
connected equimatchable claw-free odd graph with $\alpha(G) \geq 3$ and $\kappa(G)=2$.
\end{proposition}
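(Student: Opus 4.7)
The plan is to verify connectedness, odd order, $\alpha(G)\ge 3$, $\kappa(G)=2$, claw-freeness, and equimatchability separately in each of the four cases $G=C_7$, $G\in\cg_{21}$, $G\in\cg_{22}$, and $G\in\cg_{23}$. Connectedness, odd order, and the existence of an independent triple are immediate from the explicit templates: the orders are $2q+5$, $2p+2q+3$, and $2q+5$ respectively (plus $7$ for $C_7$), and an independent triple is exhibited directly, for example $\{v_1,v_3,u\}$ with $u\in V_1\setminus N_{V_1}(v_1)$ in $\cg_{21}$, $\{s_1,s_2,a\}$ with $a\notin N_{A_S}(s_1)\cup N_{A_S}(s_2)$ in $\cg_{22}$, and $\{s_1,a_2,b\}$ with $b\in N_{B_S}(s_2)\setminus N_{B_S}(s_1)$ in $\cg_{23}$. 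For $\kappa(G)=2$ I would exhibit a $2$-cut, namely $\{v_1,v_2\}$ in $\cg_{21}$ and $\{s_1,s_2\}$ in $\cg_{22}$ and $\cg_{23}$, and verify by inspection that no single vertex is a cut vertex.

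Claw-freeness is next handled by centering a hypothetical claw at every vertex ``type''. The key structural observation is that inside each family the only non-adjacent pairs come either from cross-edges between the two clique sides or from the non-edge $s_1s_2$, so any three pairwise non-adjacent neighbors of a common vertex would have to draw at most one leaf from each clique side, leaving at most two non-adjacent neighbors in total. For example, a claw centered at $s_i$ in $\cg_{22}$ would need two non-adjacent neighbors in $A_S\cup B_S$, but $N_{A_S}(s_i)$ and $B_S$ are each cliques and $s_i$ is complete to $B_S$, so one gets at most one leaf from $A_S$ and one from $B_S$. Analogous local checks dispose of the remaining vertex types in all three families, and claw-freeness of $C_7$ is immediate.

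The core of the argument is equimatchability, which I would derive through the easy direction of Lemma \ref{lem:notequiIS}: it suffices to show that for every independent set $I$ of three vertices, $G\setminus I$ contains at least two odd components. The plan is to enumerate these triples using the clique structure of each side (at most one vertex from each clique) and then inspect the components of $G\setminus I$. In $\cg_{21}$ the only triples are $\{v_1,v_3,u\}$ and $\{v_2,v_4,u\}$ with $u\in V_1\setminus N_{V_1}(v_1)$, and in each case $G\setminus I$ splits into an isolated vertex ($v_4$ or $v_3$) and a connected component of odd order $2q+1$. In $\cg_{22}$ the unique triple $\{s_1,s_2,a\}$ separates the two odd cliques $A_S-a$ and $B_S$. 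In $\cg_{23}$ the two triple templates each leave an isolated vertex ($a_1$ or $a_2$) plus either one or two additional components depending on whether the surviving $s_i$ retains a neighbor in $B_S-b$; in either subcase the parities give at least two odd components. For $C_7$ the three cyclic gaps $(g_1,g_2,g_3)$ between the vertices of an independent triple satisfy $g_i\ge 1$ and $g_1+g_2+g_3=4$, so they are a cyclic permutation of $(2,1,1)$ and $G\setminus I$ has components of sizes $2$, $1$, $1$.

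The main obstacle is producing a clean enumeration of the independent triples in $\cg_{22}$ and $\cg_{23}$: one must rule out triples confined to a single clique, triples mixing an $s_i$ with a vertex of $B_S$ in $\cg_{22}$ (since $s_i$ is complete to $B_S$), and triples containing both $s_1,s_2$ together with a vertex of $B_S$ in $\cg_{23}$ (since $N_{B_S}(s_1)\cup N_{B_S}(s_2)=B_S$). Once these structural restrictions are in place, the verification reduces to parity arithmetic on the clique sizes $2p$ and $2q+1$.
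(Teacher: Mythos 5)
Your proposal is correct and takes essentially the same route as the paper: the paper's own proof of this proposition is a two-line sketch that reduces equimatchability to enumerating the few types of independent sets of size three and applying the backward direction of Lemma \ref{lem:notequiIS}, declaring all remaining properties easily verifiable. Your write-up simply carries out that enumeration explicitly (the triple templates, the parity bookkeeping in $\cg_{22}$ and $\cg_{23}$, and the gap argument for $C_7$ all check out), so nothing further is needed.
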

\begin{proof}
All the other properties being easily verifiable, we will check the equimatchability of a graph $G \in \cg_{21} \cup \cg_{22} \cup \cg_{23} + C_7$ by using Lemma \ref{lem:notequiIS}. One can observe that in each case, there is only one possible type (up to isomorphisms) of independent set $I$ of three vertices which is as described below.

If $G \in \cg_{21}$ then $I$ consists of $v_1,v_3$ and a vertex in $V_1$. Then $G \setminus I$ consists of one component with the single vertex $v_4$ and the other $G \setminus (I + v_4)$ which is odd.

If $G \in \cg_{22}$ then $I$ consists of $s_1,s_2$ and a vertex $a \in A_S \setminus (N_{A_S}(s_1)\cup N_{A_S}(s_2))$. Then $G \setminus I$ consists of two odd components, namely $B_S$ and $A_S - a$.

If $G \in \cg_{23}$ then $I = \set{a_1, s_2, b}$ where $b \in B_S \setminus N_{B_S}(s_2)$ (assuming without loss of generality that $x>0$). Then $G\setminus I$ consists of two odd components: the singleton $\set{a_2}$ and $G\setminus (I + a_2)$ which is odd.

Finally, if $G$ is a $C_7$, then for any independent set $I$ of three vertices, the graph $G \setminus I$ consists of two singletons and two adjacent vertices.
\end{proof}

In the rest of this section, we proceed as follows to prove the other direction: In Proposition \ref{prop:2ConnectedC4}, we analyze the case where $A_S$ is a $C_4$ for some 2-cut $S$. Subsequently, in Observation \ref{obs:favaronClawFree} we summarize Lemma \ref{lem:favaron} for the case where $A_S$ is not a $C_4$, and $\abs{B_S}>1$ where $S$ is an independent 2-cut. We further separate this case into two. In Proposition \ref{prop:2ConnectedGeneral}, we give the exact structure of $G$ when $B_S$ is neither a singleton nor a $P_3$. In Proposition \ref{prop:2ConnectedP3}, we give the exact structure of $G$ when $B_S$ is a $P_3$. We complete the analysis in Proposition \ref{prop:2ConnectedB1}, which determines the exact structure of $G$ in the last case, i.e.\ when $A_S$ is not a $C_4$ and $\abs{B_S}=1$. In the proofs of Propositions \ref{prop:2ConnectedGeneral}, \ref{prop:2ConnectedP3} and  \ref{prop:2ConnectedB1}, we heavily use the fact that the graph under consideration has a strongly independent 2-cut $S$. Moreover, this fact will allows us to conclude in Theorem \ref{thm:main} that we cover all possible cases for claw-free equimatchable odd graphs of connectivity 2.

\begin{proposition} \label{prop:2ConnectedC4}
If $A_S$ is a $C_4$ for some 2-cut $S$ of $G$, then $G \in \cg_{21}$ and $S$ is not independent.
\end{proposition}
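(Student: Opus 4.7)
The plan is to pin down the adjacencies between $S=\{s_1,s_2\}$ and $A_S$ via Lemmas~\ref{lem:ClawFreeCutSets} and~\ref{lem:favaron}, then to apply Lemma~\ref{lem:favaron} a second time to a minimal $2$-cut discovered inside $A_S$; a short claw argument together with Lemma~\ref{lem:notequiIS} will handle the residual cases. Concretely, labeling $A_S$ as the induced cycle $v_1v_2v_3v_4$, Lemma~\ref{lem:ClawFreeCutSets} forces each $N_{A_S}(s_i)$ to be a pair of adjacent vertices of the $C_4$ (non-empty by minimality of $S$). Lemma~\ref{lem:favaron} additionally requires $A_S\setminus\{a_1,a_2\}$ to be connected randomly matchable for any distinct $a_1\in N_{A_S}(s_1)$, $a_2\in N_{A_S}(s_2)$; in $C_4$ this fails precisely when $a_1,a_2$ are opposite, so a quick case check (after a suitable relabeling so that $N_{A_S}(s_1)=\{v_1,v_2\}$) forces $N_{A_S}(s_2)=\{v_1,v_2\}$. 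In particular, $v_3$ and $v_4$ have no neighbor in $S$, and $S'':=\{v_1,v_2\}$ is itself a minimal $2$-cut of $G$ with even component $\{v_3,v_4\}$ and odd component $B_{S''}=\{s_1,s_2\}\cup B_S$.

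A second application of Lemma~\ref{lem:favaron}, this time to $S''$ with $\{s_1,s_2\}$ in the role of $\{b_1,b_2\}$, puts $B_{S''}$ into one of four canonical forms; claw-freeness of the induced subgraph $B_{S''}$ reduces these to $K_{2p+1}$, $K_{2p+1}-s_1s_2$, $P_3$ (with $s_1,s_2$ at the two leaves), and $K_{2,3}+s_1s_2$. The main obstacle is then to eliminate the three cases in which $V_1:=\{s_1,s_2\}\cup B_S$ is not a clique. When $s_1s_2\notin E$ (the $K_{2p+1}-s_1s_2$ and $P_3$ cases), the set $\{v_1,v_4,s_1,s_2\}$ induces a claw centered at $v_1$, since $v_1$ is adjacent to all three of $v_4,s_1,s_2$ while $v_4s_1,v_4s_2\notin E$ (because $N_{A_S}(s_i)=\{v_1,v_2\}$) and $s_1s_2\notin E$ by assumption, contradicting claw-freeness. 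In the $K_{2,3}+s_1s_2$ case, $B_S$ is a $P_3$ whose central vertex $x$ is non-adjacent to both $s_1,s_2$, and the independent set $\{v_3,s_1,x\}$ leaves $G$ minus it as a single even component, contradicting Lemma~\ref{lem:notequiIS}. Hence $B_{S''}=K_{2p+1}$ for some $p\ge 1$, giving $s_1s_2\in E$ (so $S$ is not independent) and $V_1$ a $(2p+1)$-clique.

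Taking $V_2:=A_S$ and $q:=p$, all clauses of the definition of $\cg_{21}$ are then immediate: $V_1$ induces $K_{2q+1}$, $V_2$ induces the $C_4$ $v_1v_2v_3v_4$, $N_{V_1}(v_1)=N_{V_1}(v_2)=\{s_1,s_2\}$ satisfies $2\le 2<|V_1|$, and $N_{V_1}(v_3)=N_{V_1}(v_4)=\emptyset$ by the first step.
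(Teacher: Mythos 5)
Your proof is correct, but it takes a genuinely different route from the paper's at the key step. Both arguments begin by pinning $N_{A_S}(s_1)=N_{A_S}(s_2)$ down to a pair of adjacent vertices of the $C_4$, though the paper does this with an explicit matching $\set{s_1x,s_2y}$ that isolates the two remaining cycle vertices and contradicts Lemma \ref{lem:isolating}, whereas you invoke Lemma \ref{lem:favaron} read universally over the choice of $a_1,a_2$ (the intended reading, and exactly what the paper's matching argument justifies; worth making explicit, since an existential reading would let you dodge the contradiction by picking adjacent $a_1,a_2$). The real divergence is in proving that $S\cup B_S$ is an odd clique: the paper applies Lemma \ref{lem:isolating} to the matching $\set{a_1a_4,a_2s_2}$, which isolates $a_3$ and forces $G[B_S+s_1]$ to be randomly matchable, then rules out the $C_4$ alternative via Lemma \ref{lem:ClawFreeCutSets}; you instead observe that $\set{v_1,v_2}$ is itself a minimal $2$-cut whose odd side is $S\cup B_S$, apply Lemma \ref{lem:favaron} a second time, and eliminate the three non-clique canonical forms ($K_{2p+1}-s_1s_2$, $P_3$, $K_{2,3}+s_1s_2$) with a claw at $v_1$ and one application of Lemma \ref{lem:notequiIS}. (The claw with center $v_1$ and leaves $v_4,s_1,s_2$ that forces $s_1s_2\in E(G)$ is the same one the paper uses, in different labels; your $K_{2,3}+s_1s_2$ case could also be killed outright by the claw $\set{s_1;v_1,u_1,u_2}$.) The paper's version is more self-contained, needing only one ad hoc matching plus the randomly-matchable classification; yours is more uniform, letting Favaron's lemma do the structural work twice at the price of a small finite case analysis and the (correctly verified) check that $\set{v_1,v_2}$ is a minimal cut with $s_1,s_2$ eligible as its $b_1,b_2$.
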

\begin{proof}
Let $S=\set{s_1, s_2}$ be a 2-cut of $G$, and $A_S$ be a 4-cycle. 
In what follows, we show that $G \in \cg_{21}$ by setting $V_2=A_S$ and $V_1= V(G)\setminus V_2 = S \cup B_S$. 
Since $A_S$ is a 4-cycle, Property \itemref{G212} of $\cg_{21}$ holds for $G$. 
Let $A_S$ be the 4-cycle $v_1v_2v_3v_4$. 
By Lemma \ref{lem:ClawFreeCutSets} \itemref{TwoAdjacentNeigborsInCycle}, both $N_{A_S}(s_1)$ and $N_{A_S}(s_2)$ consist of two adjacent vertices of $V_2$. 
If $N_{V_2}(s_1) \neq N_{V_2}(s_2)$, then $N_{V_2}(s_1) \cup N_{V_2}(s_2)$ contains two non-adjacent vertices $x,y$ such that $x \in N_{V_2}(s_1)$, and $y \in N_{V_2}(s_2)$. 
Then the matching $\set{s_1x,s_2y}$ isolates the two vertices of $V_2 \setminus \set{x,y}$, contradicting Lemma \ref{lem:isolating} \itemref{OneOddEqimatchableComponent}.
Therefore, $N_{V_2}(s_1)=N_{V_2}(s_2)$ and it consists of two adjacent vertices of ${V_2}$, say $v_1$ and  $v_2$.
Since $S$ is a cut set, the neighbors of $v_1$ and $v_2$ in $V_1$ are exactly $s_1$ and $s_2$, thus showing \itemref{G213} and the first inequality of \itemref{G214}. 
The second part of the inequality follows from the fact that $S$ is a cut-set and $V_1 \setminus S = B_S \neq \emptyset$.
Furthermore, \itemref{G215} holds since $S\subsetneq V_1$ is a cut set and the neighborhood of $S$ in $V_2$ consists of $\{v_1, v_2\}$.

It remains to show Property \itemref{G211}, i.e.\ that $V_1$ is an odd clique.
Observe that $s_1s_2 \in E(G)$ since otherwise $S + v_2 + v_3$ forms a claw. 
Thus, $S$ is not independent.
The matching $\set{v_1v_4, v_2s_2}$ leaves the singleton $v_3$ as an odd component. 
Therefore, by Lemma \ref{lem:isolating} \itemref{OneOddEqimatchableComponent} and \itemref{EvenComponentsRandomlyMatchable}, $G[V_1-s_2]$ is randomly matchable which is either an even clique or a $C_4$ by Proposition \ref{prop:EvenEquimatchable}. 
Suppose that $G[V_1-s_2]$ is the cycle $s_1 b_1 b b_3$. 
We have that $N_{B_S}(s_1)=\set{b_1, b_3}$ is not a clique, contradicting Lemma \ref{lem:ClawFreeCutSets} \itemref{NeighborhoodOfConnectingVertexIsTwoCliques}. 
Therefore,  $G[V_1-s_2]$ is a $K_{2q}$ for some $q \geq 1$. By symmetry,  $G[V_1-s_1]$ is also a $K_{2q}$. Since $s_1s_2 \in E(G)$, we conclude that $G[V_1]$ is a $K_{2q+1}$ for some $q \geq 1$.
\end{proof}

\begin{observation}\label{obs:favaronClawFree}
If $S$ is an independent $2$-cut of $G$ and $\abs{B_S}>1$, then
\begin{enumerate}[i)]
\item The subgraph $G[A_S]$ is a $K_{2p}$ for some $p \geq 1$, and
\item The subgraph $G[B_S]$ is either a $K_{2q+1}$, or $K_{2q+1} - b_1b_2$ for some $q \geq 1$.
\end{enumerate}
\end{observation}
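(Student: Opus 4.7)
The plan is to invoke Lemma \ref{lem:favaron} to constrain the structure of $A_S$ and $B_S$, and then use claw-freeness together with the equimatchability criterion of Lemma \ref{lem:notequiIS} to eliminate the extraneous cases.

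For the structure of $A_S$, I will first note that Lemma \ref{lem:favaron} guarantees $A_S$ is connected and randomly matchable (since $\abs{B_S}>1$), hence by Lemma \ref{lem:RandomlyMatchable} it is isomorphic to $K_{2p}$ or $K_{p,p}$ for some $p\geq 1$. I can discard $K_{p,p}$ with $p\geq 3$ via claw-freeness (any vertex of one side together with three vertices of the other induces a claw), and I can discard $K_{2,2}=C_4$ by Proposition \ref{prop:2ConnectedC4}, which forces $S$ to be non-independent whenever $A_S=C_4$, contradicting our hypothesis that $S$ is independent. The case $K_{1,1}=K_2$ already coincides with $K_{2p}$ at $p=1$, which establishes part (i).

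For $B_S$, Lemma \ref{lem:favaron} restricts the graph to one of $K_{2q+1}$, $K_{2q+1}-b_1b_2$, $K_{q,q+1}$, or $K_{q,q+1}+b_1b_2$ with $q\geq 1$. For $q=1$ all four options collapse to $K_3$ or $P_3$, already of the desired form. For $q\geq 2$ I can exclude $K_{q,q+1}$ by claw-freeness, since any vertex of the smaller side has $q+1\geq 3$ pairwise non-adjacent neighbours; and for $q\geq 3$ the same argument still works for $K_{q,q+1}+b_1b_2$, because the $q+1$ vertices of the larger side still contain $q-1\geq 2$ vertices that are mutually non-adjacent and also non-adjacent to $b_1$, producing an induced claw.

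The single remaining case is $B_S=K_{2,3}+b_1b_2$ with $q=2$, and I expect this to be the main obstacle: the graph $K_{2,3}+b_1b_2$ is itself claw-free, so the exclusion must appeal to the global structure of $G$ rather than to a forbidden induced subgraph. My plan here is to exhibit an independent triple whose removal leaves $G$ with no odd component, contradicting Lemma \ref{lem:notequiIS}. Concretely, let $\set{u_1,u_2}$ be the smaller partite side of $B_S$; by Lemma \ref{lem:favaron} the neighbours of $S$ in $B_S$ all lie in the larger partite side, so $u_1$ is non-adjacent to both $s_1$ and $s_2$. Together with the independence of $S$, this makes $I=\set{s_1,s_2,u_1}$ an independent set of three vertices, while $G\setminus I$ splits into $A_S$ (of order $2p$) and $B_S-u_1$ (connected of order $4$), both of which are even. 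Lemma \ref{lem:notequiIS} is thereby violated, forcing $B_S$ to be $K_{2q+1}$ or $K_{2q+1}-b_1b_2$ for some $q\geq 1$, which is part (ii).
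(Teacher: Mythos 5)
Your proof is correct, and part (i) follows the paper's route exactly (the paper packages the exclusion of $K_{p,p}$ through Proposition \ref{prop:EvenEquimatchable} rather than citing Lemma \ref{lem:RandomlyMatchable} directly, but the content is identical). On part (ii) you diverge: the paper disposes of \emph{all} of $K_{q,q+1}$ and $K_{q,q+1}+b_1b_2$ with $q\geq 2$ by a single claw that is not contained in $B_S$, namely the one centered at $b_1$ with leaves $s_1$ and two vertices of the smaller partite set --- these leaves are pairwise non-adjacent because the smaller side is independent and, by Lemma \ref{lem:favaron}, no vertex of the smaller side is a neighbor of $S$. Because you only looked for claws inside $B_S$, you correctly identified $K_{2,3}+b_1b_2$ as internally claw-free and had to fall back on a global argument via Lemma \ref{lem:notequiIS}, exhibiting the independent triple $\set{s_1,s_2,u_1}$ whose removal leaves only the two even components $A_S$ and $B_S-u_1$. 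That argument is sound (it uses the independence of $S$, which is a hypothesis of the observation, and the fact that $u_1$ has no neighbor in $S$), and in fact it would have killed every case with $q\geq 2$ in one stroke, making your three separate claw constructions redundant. So both proofs work; the paper's is shorter because it lets the claw straddle the cut set, while yours trades that observation for an extra application of the equimatchability criterion.
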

\begin{proof}
\begin{enumerate}[i)]
\item By Lemma \ref{lem:favaron}, $A_S$ is connected and randomly matchable. By Proposition \ref{prop:2ConnectedC4}, $A_S$ is not a $C_4$. Then, by Proposition \ref{prop:EvenEquimatchable}, $A_S$ is a $K_{2p}$ for some $p \geq 1$.
\item Recall Lemma \ref{lem:favaron}. In this case, $B_S$ cannot be a $K_{q,q+1}$ or $K_{q,q+1} + b_1b_2$ for $q \geq 2$ since otherwise (recalling that $b_1$ is in the larger part of the bipartition) $s_1$, $b_1$ and two vertices adjacent to $b_1$ in the smaller part of the bipartition of $B_S$ induce a claw. For $q=1$ we note that $K_{1,2}=K_3 - e$ and $K_{1,2}+e=K_3$. Therefore, $B_S$ is either a $K_{2q+1}$, or a $K_{2q+1} - b_1b_2$ for some $q \geq 1$.
\end{enumerate}
\end{proof}

\begin{figure}[p]
\begin{subfigure}{1\textwidth}
 \centering
  \includegraphics{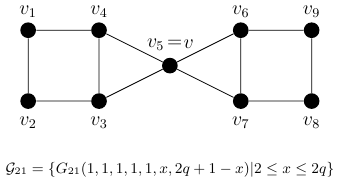}
  \caption{The graph $G_{11}$.}
  \label{subfig:G11}
\end{subfigure}
\begin{subfigure}{.5\textwidth}
  \centering
  \includegraphics{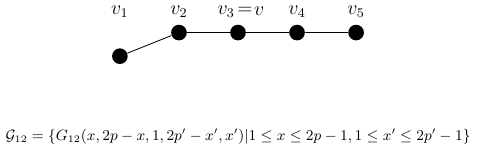}
  \caption{The graph $G_{12}$.}
  \label{subfig:G12}
\end{subfigure}
\begin{subfigure}{.5\textwidth}
  \centering
  \includegraphics{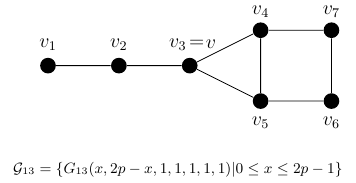}
  \caption{The graph $G_{13}$.}
  \label{subfig:G13}
\end{subfigure}
\begin{subfigure}{.5\textwidth}
  \centering
  \includegraphics{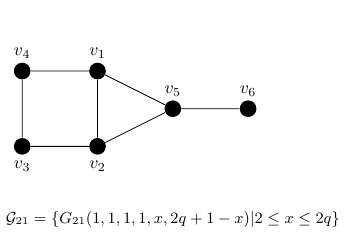}
  \caption{The graph $G_{21}$.}
  \label{subfig:G21}
\end{subfigure}
\begin{subfigure}{.5\textwidth}
  \centering
  \includegraphics{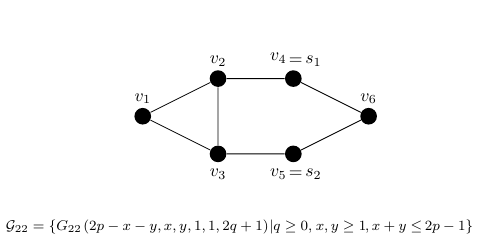}
  \caption{The graph $G_{22}$.}
  \label{subfig:G22}
\end{subfigure}
\begin{subfigure}{.5\textwidth}
  \centering
  \includegraphics{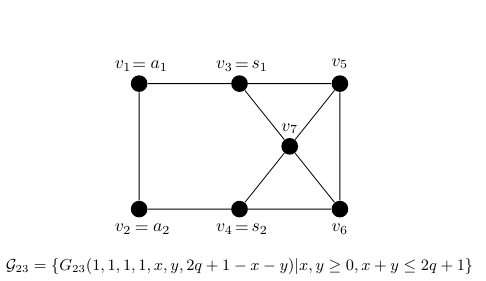}
  \caption{The graph $G_{23}$.}
  \label{subfig:G23}
\end{subfigure}
\begin{subfigure}{.5\textwidth}
  \centering
  \includegraphics{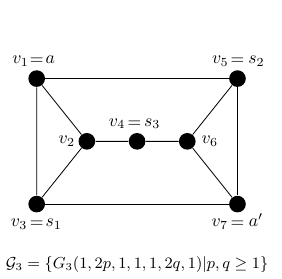}
  \caption{The graph $G_3$.}
  \label{subfig:G3}
\end{subfigure}
\caption{All equimatchable claw-free odd graphs with independence number at least 3 except $C_7$.}
\label{fig:graphfamilies}
\end{figure}

\begin{proposition}\label{prop:2ConnectedGeneral}
If there exists a strongly independent 2-cut $S$ of $G$ such that $\abs{B_S}>1$ and $B_S$ is not a $P_3$, then $G \in \cg_{22}$.
\end{proposition}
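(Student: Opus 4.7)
The plan is to verify, one by one, the five defining properties of $\cg_{22}$. By Observation~\ref{obs:favaronClawFree}, property~\itemref{G221} is already in hand: $A_S=K_{2p}$ for some $p\geq 1$, and $B_S$ is either $K_{2q+1}$ or $K_{2q+1}-b_1b_2$ for some $q\geq 1$. The hypothesis $B_S\neq P_3$ forces $q\geq 2$ in the second alternative, since $K_3-b_1b_2=P_3$.

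First I would use the strong independence of $S$ to locate the third vertex. Let $\set{s_1,s_2,v}$ be an independent set. I claim $v\in A_S$: if instead $v\in B_S$, then $v\notin N(s_1)\cup N(s_2)$, and $v\neq b_1,b_2$ by Lemma~\ref{lem:favaron}; in both shapes of $B_S$ the subgraph $B_S-v$ is connected with $2q$ vertices (in the $K_{2q+1}-b_1b_2$ case, $q\geq 2$ keeps $b_1$ and $b_2$ connected through a common neighbor in $B'_S-v$), so $G\setminus\set{s_1,s_2,v}=A_S\cup(B_S-v)$ has two even components and no odd one, contradicting Lemma~\ref{lem:notequiIS}. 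This gives property~\itemref{G224}, and property~\itemref{G225} then follows immediately from Lemma~\ref{lem:ClawFreeCutSets}~\itemref{NeighborhoodOfCutSetInAClique} applied to the clique $A_S$, since its other alternative $N_A(s_1)\cup N_A(s_2)=A_S$ has just been ruled out.

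The main substep is property~\itemref{G222}: that $s_1,s_2$ are complete to $B_S$. I will argue by contradiction on a vertex $b\in B_S\setminus N(s_1)$. If also $b\notin N(s_2)$, then $\set{s_1,s_2,b}$ is independent and $G$ minus it is the disjoint union of the even connected subgraphs $A_S$ and $B_S-b$, contradicting Lemma~\ref{lem:notequiIS}. Otherwise $b\in N(s_2)$, and I would use the vertex $v\in A_S\setminus(N(s_1)\cup N(s_2))$ produced above to form the independent set $\set{s_1,b,v}$. Then $G\setminus\set{s_1,b,v}=\set{s_2}\cup(A_S-v)\cup(B_S-b)$, and $s_2$ is connected to $A_S-v$ through any of its neighbors in $A_S$ (none of which equals $v$ since $v\notin N(s_2)$). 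Depending on whether $s_2$ still has a neighbor in $B_S-b$, the set is either one even component of size $2p+2q$ or two disjoint even components of sizes $2p$ and $2q$. Either way no odd component appears, again contradicting Lemma~\ref{lem:notequiIS}. The symmetric argument gives that $s_2$ is also complete to $B_S$.

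Property~\itemref{G223} then comes for free: since $s_1$ is complete to $B_S$, Lemma~\ref{lem:ClawFreeCutSets}~\itemref{NeighborhoodOfConnectingVertexIsTwoCliques} forces $N_{B_S}(s_1)=B_S$ to be a clique, which rules out the $K_{2q+1}-b_1b_2$ alternative and leaves $B_S=K_{2q+1}$. The hard part of the plan is the bookkeeping inside property~\itemref{G222}: one must check that $B_S-b$ stays connected in both shapes of $B_S$, and in particular handle the subcase $b=b_2$ of the $K_{2q+1}-b_1b_2$ variant where $s_2$ could have $b$ as its only neighbor in $B_S$; both are straightforward consequences of Lemma~\ref{lem:favaron} and of the chosen positions of $b$ and $v$.
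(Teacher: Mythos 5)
Your proposal is correct and follows essentially the same route as the paper: both verify the five properties of $\cg_{22}$ one by one, getting \itemref{G221} from Observation \ref{obs:favaronClawFree}, locating the third vertex of the independent triple in $A'_S$ to obtain \itemref{G224} and hence \itemref{G225} via Lemma \ref{lem:ClawFreeCutSets} \itemref{NeighborhoodOfCutSetInAClique}, and proving \itemref{G222} by applying Lemma \ref{lem:notequiIS} to independent triples consisting of one of $s_1,s_2$, a non-neighbor in $B_S$, and a non-neighbor in $A'_S$ (the paper packages this as the claim that each $s_i$ is complete to $A'_S$ or to $B_S$ and then eliminates cases, but the underlying independent sets and parity counts are the same). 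Your extra care about the connectivity of $B_S-b$ in the $K_{2q+1}-b_1b_2$ case is a detail the paper leaves implicit, and it checks out.
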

\begin{proof}
We now show that $G$ has all the properties of $\cg_{22}$. Let $S=\set{s_1, s_2}$ be a strongly independent 2-cut of $G$, and $I$ be an independent set of three vertices containing $S$.

The fact that $A_S$ is a $K_{2p}$ for some $p\geq 1$ ({Property \itemref{G221}}) follows from Observation \ref{obs:favaronClawFree}.

By the same observation and since $B_S$ is not a $P_3$, $B_S$ is either a $K_{2q+1}$ for some $q \geq 1$ or a $K_{2q+1}-b_1 b_2$ for some $q \geq 2$, thus 2-connected. 
Note that the unique vertex in $I \setminus S$ is not in $B_S$, since otherwise $G \setminus I$ consists of two even components. 
Therefore $I = S + a$ for some  $a \in A_S$. 
This implies that $N_{A_S}(s_1) \cup N_{A_S}(s_2) \subsetneq A_S$, thus {Property \itemref{G224}} is verified.

Now suppose that there exists a vertex $b \in B_S$ that is non-adjacent to $s_1$. 
Since both $A_S$ and $B_S$ are 2-connected, both of $A_S - a$ and $B_S - b$ are connected.
Moreover, $s_2$ is adjacent to $A_S - a$.
Then $I'= \set{a,s_1,b}$ is an independent set such that $G \setminus I'$ is either connected or consists of two even components $A_S-a+s_2$ and $B_S-b$, contradicting Lemma \ref{lem:notequiIS}. 
We conclude that $s_1$, and by symmetry $s_2$, are complete to $B_S$. This proves {Property \itemref{G222}}.

Since $s_1$ is complete to $B_S$, $B_S$ is a clique by Lemma \ref{lem:ClawFreeCutSets} \itemref{NeighborhoodOfConnectingVertexIsTwoCliques}. This shows {Property \itemref{G223}}.

Finally, {Property \itemref{G225}} follows from Property \itemref{G224} and Lemma \ref{lem:ClawFreeCutSets} \itemref{NeighborhoodOfCutSetInAClique}.
\end{proof}

\begin{proposition}\label{prop:2ConnectedP3}
If there exists some strongly independent 2-cut $S$ of $G$ such that $B_S$ is a $P_3$, then $G \in \cg_{23} + C_7$.
\end{proposition}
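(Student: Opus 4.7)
The plan is to first pin down the adjacencies between $S = \set{s_1, s_2}$ and $B_S = P_3$ using Lemma~\ref{lem:favaron} and claw-freeness, then use independent sets of size three (via Lemma~\ref{lem:notequiIS}) to constrain the edges from $S$ to $A_S$, and finally to split into cases that either exhibit $G$ as a member of $\cg_{23}$ by producing a \emph{different} 2-cut (not $S$ itself) or force $G = C_7$ as the only remaining possibility.

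Write $B_S = b_1 b_3 b_2$ with middle vertex $b_3$, and recall from Observation~\ref{obs:favaronClawFree} that $A_S = K_{2p}$ for some $p \geq 1$. Since $P_3 \cong K_{1,2}$, Lemma~\ref{lem:favaron} places the neighbours of $S$ in the larger partite set $\set{b_1, b_2}$, and combined with Lemma~\ref{lem:ClawFreeCutSets}~\itemref{NeighborhoodOfConnectingVertexIsTwoCliques} (together with $b_1 b_2 \notin E(G)$) this yields $N_{B_S}(s_1) = \set{b_1}$, $N_{B_S}(s_2) = \set{b_2}$, and $s_1, s_2 \not\sim b_3$. I would then show that $s_1$ is complete to $A_S \setminus \set{a_2}$ and, symmetrically, $s_2$ is complete to $A_S \setminus \set{a_1}$: if some $a \in A_S \setminus \set{a_2}$ satisfied $s_1 \not\sim a$, then $I = \set{s_1, b_2, a}$ would be independent, and $G \setminus I$ would split into $(A_S \setminus \set{a}) \cup \set{s_2}$ (even, connected through $s_2 \sim a_2$ since $a \neq a_2$) and $\set{b_1, b_3}$ (even), contradicting Lemma~\ref{lem:notequiIS}.

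Next, I would split according to the two potential extra edges $s_1 a_2$ and $s_2 a_1$. If \emph{both} are present, then $s_1$ and $s_2$ are complete to $A_S$; I would take $S' = \set{s_1, b_2}$, observe that $s_1 \not\sim b_2$, and that after removing $S'$ the edge $b_1 b_3$ is disconnected from the clique $A_S \cup \set{s_2} \cong K_{2p+1}$, so $S'$ is an independent 2-cut with $A_{S'} = \set{b_1, b_3} \cong K_2$ and $B_{S'} \cong K_{2p+1}$; the induced subgraph $G[S' \cup A_{S'}]$ has exactly the edges $s_1 b_1, b_1 b_3, b_3 b_2$ and is therefore the path $s_1 b_1 b_3 b_2 \cong P_4$, and $N_{B_{S'}}(s_1) \cup N_{B_{S'}}(b_2) = A_S \cup \set{s_2} = B_{S'}$, so all properties of $\cg_{23}$ hold with $q = p$. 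The two ``mixed'' cases in which exactly one of $s_1 a_2, s_2 a_1$ is missing are ruled out because $I = \set{s_1, b_2, a_2}$ (or its symmetric counterpart) then leaves two even components $(A_S \setminus \set{a_2}) \cup \set{s_2}$ and $\set{b_1, b_3}$, contradicting Lemma~\ref{lem:notequiIS}. Finally, if both $s_1 a_2$ and $s_2 a_1$ are missing, applying Lemma~\ref{lem:notequiIS} to the same $I$ forces $N_{A_S}(s_2) = \set{a_2}$, and symmetrically $N_{A_S}(s_1) = \set{a_1}$; combined with the completeness step this gives $A_S \setminus \set{a_2} \subseteq \set{a_1}$, so $p = 1$. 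A direct edge count then shows that $G$ has exactly the seven edges $a_1 a_2, s_1 a_1, s_2 a_2, s_1 b_1, s_2 b_2, b_1 b_3, b_2 b_3$, which form the 7-cycle $s_1 a_1 a_2 s_2 b_2 b_3 b_1 s_1$, so $G = C_7$.

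I expect the main obstacle to be recognizing that the 2-cut witnessing membership in $\cg_{23}$ is \emph{not} the hypothesised $S$ but rather $S' = \set{s_1, b_2}$; under $S'$ the roles swap, with the $P_3$-side contributing its $K_2$-subpath $\set{b_1, b_3}$ to play the role of $A_{S'}$ and the original $A_S$ merging with $s_2$ into the odd clique $K_{2p+1}$ that plays the role of $B_{S'}$. Once this re-framing is spotted, the remaining verifications are bookkeeping applications of Lemma~\ref{lem:notequiIS} and Observation~\ref{obs:favaronClawFree}.
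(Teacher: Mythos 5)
Your overall strategy coincides with the paper's: determine the adjacencies of $S=\set{s_1,s_2}$ to the $P_3$ and to $A_S$, split on the presence of the edges $s_1a_2$ and $s_2a_1$, and then witness membership in $\cg_{23}$ not via $S$ itself but via the re-labelled cut $S'=\set{s_1,b_2}$, with $\set{b_1,b_3}$ playing the role of the even $K_2$ and $A_S+s_2$ the odd clique $K_{2p+1}$; this is exactly the re-framing the paper performs at the end of its proof, and your $C_7$ endgame matches the paper's $A_S=K_2$ subcase. Your independent-set arguments for the completeness of $s_i$ to $A_S\setminus\set{a_{3-i}}$, for killing the two mixed cases, and for forcing $p=1$ when both of $s_1a_2,s_2a_1$ are absent are all sound.

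The one genuine gap is your opening step: you derive $s_1,s_2\not\sim b_3$ (the centre of the $P_3$) from the ``larger partite set'' clause of Lemma \ref{lem:favaron}. But $P_3\cong K_{1,2}\cong K_3-b_1b_2$, so $B_S$ lies in the overlap of two of Favaron's four cases, and the partite-set clause is attached only to the two bipartite disjuncts; the statement does not tell you which disjunct ``holds'' for an ambiguous $B_S$. Indeed, Observation \ref{obs:favaronClawFree} --- which you invoke for $A_S$ --- deliberately folds $K_{1,2}$ into $K_3-b_1b_2$ for $q=1$ and thereby discards the partite-set information, so within the paper's toolkit that clause is not available here. Claw-freeness (Lemma \ref{lem:ClawFreeCutSets} \itemref{NeighborhoodOfConnectingVertexIsTwoCliques}) only excludes $s_1\sim b_2$, not $s_1\sim b_3$, since $\set{b_1,b_3}$ is a clique. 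The omission matters downstream: your $\cg_{23}$ case needs $\set{b_1,b_3}$ to be an entire component of $G\setminus\set{s_1,b_2}$ and needs $G[\set{s_1,b_1,b_3,b_2}]$ to be an induced $P_4$, both of which fail if some $s_i\sim b_3$. The paper closes precisely this hole with a short equimatchability argument (applying Lemma \ref{lem:notequiIS} to the independent set $\set{a,b_1,b_2}$ with $a\in A_S-a_1$ to conclude $s_1b_3\notin E(G)$, and symmetrically for $s_2$); insert such an argument and the remainder of your proof goes through as written.
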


\begin{proof}
Let $S=\set{s_1, s_2}$ be a strongly independent 2-cut of $G$. We now show that $G$ is either a $C_7$ or has the following properties:
\begin{enumerate}[i)]
\item The subgraph $A_S$ is a $K_{2p}$ for some $p \geq 1$,
\item The subgraph $G[S \cup B_S]$ is a $P_5$,
\item $N_{A_S}(s_1) = N_{A_S}(s_2) = A_S$.
\end{enumerate}
Then, we will show that these properties imply that $G \in \cg_{23}$ with zero copies of $v_7$ and an independent 2-cut different from $S$.
\begin{enumerate}[i)]
\item Follows from Observation \ref{obs:favaronClawFree}.
\item The subgraph $G[B_S]$ is a path $b'_1 b' b'_2$. 
If $s_1$ is not adjacent to any of $b'_1$ and $b'_2$ then $s_1$ is adjacent to $b'$ and $B+s_1$ is a claw.
Therefore, without loss of generality $s_1$ is adjacent $b'_1$.
If $s_1 b'_2 \in E(G)$, then $N_B(s_1)$ is not a clique, contradicting Lemma \ref{lem:ClawFreeCutSets} \itemref{NeighborhoodOfConnectingVertexIsTwoCliques}. 
Therefore, $s_1 b'_2 \notin E(G)$.
Let $a$ be an arbitrary element of $A_S - a_2$.
Clearly, $A_S-a+s_2$ is connected.
If $s_1 b' \in E(G)$ then $I=\set{a,b'_1,b'_2}$ is an independent set such that $G \setminus I$ is either connected or has two even components. 
Therefore, $s_1 b' \notin E(G)$, concluding that $N_{B_S}(s_1) = \set{b'_1}$. 
Symmetrically, we have $N_{B_S}(s_2) = \set{b'_2}$.
Therefore, $b'_1=b_1$ and $b'_2=b_2$, thus $S \cup B_S$ induces the $P_5=s_1 b_1 b b_2 s_2$.

\item Recall that $A'_S= A_S - a_1 - a_2$.  
First assume that $A'_S \neq \emptyset$.
Furthermore, suppose that there is some $a' \in A'_S$ not adjacent to $s_1$. 
Then $I'=\set{s_1, a', b_2}$ is an independent set and $G \setminus I'$ has two even components, contradicting Lemma \ref{lem:notequiIS}. 
Therefore, $s_1$ is complete to $A'_S$ and symmetrically so is $s_2$. 
Now suppose that $s_1 a_2 \notin E(G)$, and consider the independent set $I''=\set{s_1, a_2, b_2}$. 
Then, $G \setminus I''$ has two even components, contradicting Lemma \ref{lem:notequiIS}. 
Therefore,  $s_1 a_2 \in E(G)$, and symmetrically $s_2 a_1 \in E(G)$. 
We conclude that $N_{A_S}(s_1)=N_{A_S}(s_2)=A_S$.

Now assume that $A'_S = \emptyset$, i.e. $A_S=\set{a_1,a_2}$. 
Then $a_1 a_2 s_2 b_2 b b_1 s_1$ is a Hamiltonian cycle of $G$. 
The edge set of $G$ possibly contains one or both of the edges $a_1 s_2, a_2 s_1$.
If both are edges of $G$, then $N_{A_S}(s_1)=N_{A_S}(s_2)=A_S$ and we are done. 
If none is an edge of $G$, then $G$ is a $C_7$.
We remain with the case that exactly one of $a_1 s_2, a_2 s_1$, say $a_1 s_2$ is an edge of $G$.
In this case $\set{a_2, s_1, b_2}$ is an independent set whose removal separates $G$ into two even components, contradicting Lemma \ref{lem:notequiIS}. 
\end{enumerate}
We now observe that the above properties imply $G \in \cg_{23}$. Indeed, let $S'$ be the independent set $\set{s_1, b_2}$, and verify the properties of $\cg_{23}$: \itemref{G231} $A_{S'}= \set{b,b_1}$ is a $K_2$, \itemref{G232} $G[S' \cup A_{S'}]=G[\set{s_1, b_2, b, b_1}]$ is the $P_4=s_1 b_1 b b_2$, \itemref{G233} $B_{S'}=A_{S}+s_2$ is an odd clique since $A_{S}$ is an even clique and $s_2$ is complete to it, \itemref{G234} $s_1$ is complete to $A_{S}$ and $b_2$ is adjacent to $s_2$, thus $N_{B_{S'}}(s_1)\cup N_{B_{S'}}(b_2)=B_{S'}$ and $N_{B_{S'}}(s_1), N_{B_{S'}}(b_2) \neq \emptyset$, furthermore $N_{B_{S'}}(s_1)\neq B_{S'}$ since $s_1s_2\notin E(G)$.
\end{proof}

\begin{proposition}\label{prop:2ConnectedB1}
If for every 2-cut $S$ of $G$ the component $A_S$ is not a $C_4$, and for every strongly independent 2-cut $S$ of $G$ the component $B_S$ consists of a single vertex, then $G \in \cg_{21} \cup \cg_{22} \cup \cg_{23}$.
\end{proposition}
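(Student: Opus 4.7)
The plan is to take a strongly independent 2-cut $S=\set{s_1,s_2}$, which exists by Corollary \ref{cor:stronglyind}. By hypothesis $B_S=\set{b}$, and since $G$ is 2-connected and $S$ is minimal, $b$ is adjacent to both $s_1$ and $s_2$. The strong independence of $S$ yields some $a^*\in A_S$ non-adjacent to both $s_1$ and $s_2$. We split the proof on whether $A_S$ is a clique.

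When $A_S$ is a clique, we show that $G\in\cg_{22}$: $A_S$ is $K_{2p}$ for some $p\geq 1$ since it is connected and even, giving \itemref{G221}; $B_S=K_1$ gives \itemref{G223} with $q=0$; \itemref{G222} holds because $b$ is adjacent to both $s_i$; \itemref{G224} holds because $a^*\notin N_A(s_1)\cup N_A(s_2)$; and \itemref{G225} follows from Lemma \ref{lem:ClawFreeCutSets} \itemref{NeighborhoodOfCutSetInAClique} applied to the clique $A_S$ and the non-adjacent pair $s_1,s_2$, since the alternative $N_A(s_1)\cup N_A(s_2)=A_S$ is ruled out by $a^*$.

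When $A_S$ is not a clique, we show $G\in\cg_{21}$. The key is to exhibit a distinguished vertex $a\in A_S$ so that $V_2=\set{s_1,a,s_2,b}$ induces the required $C_4$ (with non-edges $s_1s_2$ and $ab$) and $V_1=A_S\setminus\set{a}$ is the clique. First we argue $N_{A_S}(s_1)\cap N_{A_S}(s_2)\neq\emptyset$: otherwise, choosing $a_i\in N_A(s_i)$ (distinct by disjointness), a case analysis using Lemma \ref{lem:notequiIS} on independent 3-sets such as $\set{a_1,a_2,b}$ (whenever $a_1a_2\notin E(G)$) either produces zero odd components in $G\setminus I$, or forces $\set{a_1,a_2}$ to be a strongly independent 2-cut whose $B$-component is the $P_3$ path $s_1bs_2$ of size three, both contradicting our hypotheses. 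A similar iterated argument (varying the choice of the common neighbor) shows that the common neighborhood is a singleton, and that, up to relabeling, $N_{A_S}(s_2)=\set{a}$.

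Having fixed $a$, the $\cg_{21}$ properties are verified as follows: $V_2$ induces a $C_4$ by construction; $N_{V_1}(s_2)=\emptyset$ by the choice of $a$, and $N_{V_1}(b)=\emptyset$ since $b$'s only neighbors are $s_1,s_2$, giving \itemref{G215}; the existence of $a^*\in V_1$ gives the strict inequality $|N_{V_1}(s_1)|<|V_1|$ of \itemref{G214}. Applying Lemma \ref{lem:favaron} to the 2-cut $S'=\set{s_1,a}$ identifies $B_{S'}=V_1$ with one of the four graphs listed there; claw-freeness (together with $N_{V_1}(s_1)$ being a clique and the hypothesis $A_{S'}\neq C_4$) rules out the edge-modified and bipartite options, yielding $V_1=K_{2q+1}$ and hence \itemref{G211} and the lower bound of \itemref{G214}. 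The equality $N_{V_1}(s_1)=N_{V_1}(a)$ of \itemref{G213} follows by claw-freeness: a vertex $w$ in their symmetric difference, together with $b$ and the other of $\set{s_1,a}$, would induce a claw. The main obstacle is the first step of the non-clique case, namely isolating the correct $a$ and ruling out multiple common neighbors or asymmetric neighborhood configurations; this requires repeated construction of auxiliary strongly independent 2-cuts and careful handling of small configurations where some $|N_A(s_i)|$ equals one.
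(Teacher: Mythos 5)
Your clique case is sound and agrees with the branches of the paper's proof in which $A_S$ turns out to be a clique, but the non-clique case contains a genuine gap: the pivotal claim that $N_{A_S}(s_1)\cap N_{A_S}(s_2)\neq\emptyset$ whenever $A_S$ is not a clique is false, and with it the conclusion that this case always lands in $\cg_{21}$. Take $V(G)=\set{s_1,s_2,b,a_1,a_2,w_1,w_2}$ with $N(b)=\set{s_1,s_2}$, $N(s_2)=\set{b,a_2}$, $N_{A}(s_1)=\set{a_1,w_1}$, and $A=\set{a_1,a_2,w_1,w_2}$ inducing $K_4$ minus the edge $a_1a_2$. This graph is claw-free, odd, $2$-connected, and its only independent triples are $\set{s_1,s_2,w_2}$ and $\set{b,a_1,a_2}$, each leaving two odd components, so it is equimatchable by Lemma \ref{lem:notequiIS}; its only $2$-cuts are $\set{s_1,s_2}$, $\set{b,a_2}$ and $\set{s_1,a_2}$, none with a $C_4$ even component, and the two strongly independent ones among them have singleton odd components. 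Thus all hypotheses of Proposition \ref{prop:2ConnectedB1} hold, $A_S$ is not a clique for $S=\set{s_1,s_2}$, yet $N_{A_S}(s_1)=\set{a_1,w_1}$ and $N_{A_S}(s_2)=\set{a_2}$ are disjoint. Your argument for non-emptiness breaks precisely here: for $I=\set{a_1,a_2,b}$ the graph $G\setminus I$ has the two odd components $\set{s_2}$ and $\set{s_1,w_1,w_2}$, because when $N_{A_S}(s_2)$ is a singleton contained in $I$ the vertex $s_2$ is cut off by itself; neither of your two alternatives (no odd component, or $\set{a_1,a_2}$ separating the $P_3$ $s_1bs_2$) occurs. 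This is exactly the ``asymmetric configuration with $\abs{N_A(s_i)}=1$'' that you defer to the end, and it cannot be repaired: the graph above contains no induced $C_4$ at all, so it is in neither $\cg_{21}$ nor $\cg_{22}$.

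The deeper issue is that the statement as printed is itself incomplete: the example above lies in $\cg_{23}$ (via the cut $\set{s_1,a_2}$), and the paper's own proof, in the sub-case $N_{A'}(s_1)\neq\emptyset$, $N_{A'}(s_2)=\emptyset$, $s_1a_2\notin E(G)$, explicitly derives $G\in\cg_{23}$; the conclusion should read $G\in\cg_{21}\cup\cg_{22}\cup\cg_{23}$, which is what Proposition \ref{prop:KappaTwo} and Theorem \ref{thm:main} actually use, so nothing downstream is damaged. The paper avoids your trap by splitting not on whether $A_S$ is a clique but on which of $N_{A'}(s_1)$, $N_{A'}(s_2)$ are empty, where $A'=A_S\setminus\set{a_1,a_2}$, and by repeatedly applying Lemma \ref{lem:isolating} to matchings such as $\set{s_1a'_1,s_2a_2}$ to force randomly matchable remainders; the configuration you are missing is handled there by passing to the auxiliary cut $\set{s_1,a_2}$ and recognizing a $\cg_{23}$ structure rather than forcing a $C_4$. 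To salvage your write-up you would need to add $\cg_{23}$ as a possible outcome and replace the false intersection claim by this finer neighborhood analysis.
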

\begin{proof}
Let $S=\set{s_1, s_2}$ be a strongly independent 2-cut of $G$. We remark that in this case we cannot use Observation \ref{obs:favaronClawFree}. Moreover, the only fact that we can deduce from Lemma \ref{lem:favaron} is that $A'_S$ is randomly matchable, a fact that is easily observed by applying Lemma \ref{lem:isolating} to the matching $\set{s_1 a_1, s_2 a_2}$.

We first observe that there are no 2-connected claw-free graphs on at most 5 vertices with an independent set of three vertices. Therefore, we can assume that $\abs{V(G)} > 5$, i.e.\ that $A'_S \neq \emptyset$.

We proceed with the proof by considering two disjoint cases.
\begin{itemize}
\item {$\mathbf{N_{A'_S}(s_1)=N_{A'_S}(s_2)=\emptyset:}$} In this case we will show that $G$ has all the properties of $\cg_{22}$. Properties \itemref{G223}, \itemref{G222}, \itemref{G224} clearly hold for $G$. If $A'_S$ is a $C_4$, then $S'=\set{a_1, a_2}$ is a 2-cut with $A_{S'}$ being a $C_4$, contradicting our assumptions. Therefore, $A'_S$ is a $K_{2p}$ for some $p \geq 1$. If $a_1 s_2 \in E(G)$, then $s_1, s_2, a_1$ and any neighbor of $a_1$ in $A'_S$ induce a claw, contradiction. Therefore, and using symmetry, we have that $a_1 s_2, a_2 s_1 \notin E(G)$, i.e.\ Property \itemref{G225} holds. It remains to show that $A_S$ is a clique.

    If $a_1 a_2 \notin E(G)$, then $S'=\set{a_1, a_2}$ is a strongly independent cut with $B_{S'}$ being a $P_3$, contradicting our assumptions. Therefore, $a_1 a_2 \in E(G)$. We now show that $A_S$ is a clique by proving that $a_1$ is complete to $A_{S}'$, and so is $a_2$ by symmetry. We first observe that $N_{A'_S}(a_1) \subseteq N_{A'_S}(a_2)$. Indeed, otherwise there is a vertex $a' \in A'_S$ adjacent to $a_1$ and not adjacent to $a_2$, and $\set{a_1,a_2,s_1,a'}$ induces a claw. By symmetry, we get $N_{A'_S}(a_1) = N_{A'_S}(a_2)$. This neighborhood has at least two vertices since otherwise $\kappa(G)=1$ where the unique common neighbor of $a_1$ and $a_2$ is a cut vertex. Now, suppose that $a_1$ is not complete to $A'_S$ and let $a' \in A'_S$ be non-adjacent to $a_1$. Then $I'=\set{a',a_1,s_2}$ is an independent set. Furthermore, $G \setminus I'$ consists of two even components, a contradiction to Lemma \ref{lem:notequiIS}. Therefore, $a_1$ is complete to $A'_S$, and so is $a_2$ by symmetry.

\item {$\mathbf{N_{A'_S}(s_1) \neq \emptyset:}$} 
We start by showing that $A_1=A'_S+a_1$ is a clique. 
Let $a'_1 \in N_{A'_S}(s_1)$  and apply Lemma \ref{lem:isolating} to the matching $\set{s_1 a'_1, s_2 a_2}$. 
It implies that $G[A'_S + a_1 - a'_1]$ is randomly matchable. 
Suppose that $G[A'_S + a_1 - a'_1]$ is a $C_4=a_1 a_2' a_3' a_4'$. 
Then $a_1 a_3', a_2' a_4' \notin E(G)$.
Then $A'_S$ is not a clique, thus it is the $C_4=a'_1 a'_2 a'_3 a'_4$.
By Lemma \ref{lem:ClawFreeCutSets} \itemref{TwoAdjacentNeigborsInCycle}, $N_{A'_S}(s_1)$ consists of two adjacent vertices of $A'_S$, namely $a'_1$ and without loss of generality $a'_2$.
Now, Lemma \ref{lem:isolating} applied to the matching $\set{s_1 a'_2, s_2 a_2}$ implies that $G[\set{a_1, a'_1, a'_3, a'_4}]$ is randomly matchable. 
However, $a'_1 a'_3 \notin E(G)$ and $a'_4$ is adjacent to all three vertices $a_1, a'_1$ and $a'_3$, thus, $G[\set{a_1, a'_1, a'_3, a'_4}]$ is neither a $C_4$ nor a clique, a contradiction.
Therefore, $G[A'_S + a_1 - a'_1]$ is a clique and consequently $A'_S$ is a $K_{2p}$ for some $p \geq 1$. 
This implies that $G[A'_S+a_1-a']$ is a $K_{2p}$ for every $a' \in N_{A'_S}(s_1)$, i.e.\ $a_1$ is complete to $A'_S-a'$. 
Moreover, $a_1$ is adjacent to $a'$ since $N_{A_S}(s_1)$ is a clique. 
We conclude that $a_1$ is complete to $A'_S$, i.e.\ that $A_1$ is a clique.

    Recall that $S$ is strongly independent. The unique vertex of $I \setminus S$ is some $a' \in A'_S \subseteq A_1$. By Lemma \ref{lem:ClawFreeCutSets} \itemref{NeighborhoodOfCutSetInAClique}, $N_{A_1}(s_1) \cap N_{A_1}(s_2)=\emptyset$. In particular, $a_1 s_2 \notin E(G)$. It remains to determine the neighborhoods of $a_2$ and $s_2$. We proceed by considering two disjoint cases regarding the neighborhood of $s_2$.
    \begin{itemize}
    \item {$\mathbf{N_{A'_S}(s_2) \neq \emptyset:}$} In this case, we will show that $G$ has all the properties of $\cg_{22}$. Properties \itemref{G223} and \itemref{G222} are trivial. Since the third vertex of $I$ is some $a' \in A'_S$, Property \itemref{G224} holds, too. It suffices to show that \itemref{G221} will hold, namely that $A_S$ is a clique. By Lemma \ref{lem:ClawFreeCutSets} \itemref{NeighborhoodOfCutSetInAClique}, this implies Property \itemref{G225}.

        Suppose that $a_2$ is not complete to $A_1$, and let $a$ be an arbitrary vertex of $A_1$ that is not adjacent to $a_2$. Then $I'=\set{a,a_2,b}$, where $b$ is the single vertex of the component $B_S$, is an independent set, and $a s_2 \notin E(G)$ since $N_{A_S}(s_2)$ is a clique by Lemma \ref{lem:ClawFreeCutSets} \itemref{NeighborhoodOfConnectingVertexIsTwoCliques}. Since $N_{A'_S}(s_2) \neq \emptyset$, $G \setminus I'$ is connected, a contradiction. Therefore, $a_2$ is complete to $A_1$, concluding that $A_S$ is a clique.

    \item {$\mathbf{N_{A'_S}(s_2) = \emptyset:}$} We first assume that $s_1 a_2 \notin E(G)$. In this case, we claim that for all $a'\in A'_S$ such that $s_1a'\notin E(G)$, $a_2$ is adjacent to $a'$. Indeed, if $a_2a'\notin E(G)$ for such a vertex $a'\in A'_S$ then $S'=\set{s_1,a_2}$ is a strongly independent 2-cut (contained by the independent set $\set{s_1,a_2,a'}$) with $|B_{S'}|\geq 3$, a contradiction to the assumption of this proposition. So, assume in what follows that $a_2$ is adjacent to every vertex in $A'_{S} \setminus N_{A'_{S}}(s_1)$. Now, we will show that $G$ has all the properties of $\cg_{23}$ using the independent 2-cut $S'=\set{s_1,a_2}$. Properties \itemref{G231}, \itemref{G232}, and \itemref{G233} are trivial since in this case $A_{S'}=\set{s_2, b}$ and $B_{S'} = A_1$. We now show Property \itemref{G234}. Since $a_2$ is adjacent to every vertex in $A'_{S} \setminus N_{A'_{S}}(s_1)$ and $s_1a_1\in E(G)$, we have that $N_{B_{S'}}(a_2) \cup N_{B_{S'}}(s_1)=B_{S'}$. Moreover, $N_{B_{S'}}(s_1) \neq \emptyset$ since $s_1a_1\in E(G)$. Finally, since  $\set{s_1,s_2}$ is a strongly independent 2-cut, there is a vertex $a'\in A'_S \subseteq A_1$ which is not adjacent to $s_1$ and consequently $a_2a'\in E(G)$ implying that $N_{B_{S'}}(a_2) \neq \emptyset$  and $N_{B_{S'}}(s_1) \neq B_{S'}$.

        Now assume that $s_1 a_2 \in E(G)$. In this case, we set $V_1=A_1$ and show that $G$ has all the properties of $\cg_{21}$. Property \itemref{G211} holds since $A_1$ is a clique, and \itemref{G212} holds since $V(G) \setminus A_1$ is the cycle $s_1 a_2 s_2 b$. Property \itemref{G215} holds since $b$ and $s_2$ do not have neighbors in $A_1$. We now show that \itemref{G213} holds. $N_{A_1}(a_2) \subseteq N_{A_1}(s_1)$ since otherwise $a_2, s_1, s_2$ and a fourth vertex that is adjacent to $a_2$ and non-adjacent to $s_1$ form a claw. Furthermore, $N_{A_1}(s_1) \subseteq N_{A_1}(a_2)$ since otherwise $s_1, a_2, b$ and a fourth vertex adjacent to $s_1$ and non-adjacent to $a_2$ form a claw. We now proceed to Property \itemref{G214}. Since $N_{A'_S}(s_1) \neq \emptyset$ and $s_1a_1\in E(G)$, we have $\abs{N_{A_1}(s_1)} \geq 2$. Moreover, $N_{A_1}(s_1) \neq A_1$ since otherwise $\alpha(G)=2$. This concludes the proof.
    \end{itemize}
\end{itemize}
\end{proof}

Let us summarize the results of this section in the following:

\begin{proposition}\label{prop:KappaTwo}
If $G$ is an equimatchable claw-free odd graph with $\alpha(G) \geq 3$ and $\kappa(G)=2$, then $G \in \cg_{21} \cup \cg_{22} \cup \cg_{23} + C_7$.
\end{proposition}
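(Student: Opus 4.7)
The plan is to perform a case analysis on the structure of 2-cuts of $G$, invoking the four structural Propositions \ref{prop:2ConnectedC4}, \ref{prop:2ConnectedGeneral}, \ref{prop:2ConnectedP3}, and \ref{prop:2ConnectedB1} established in this section. Since $\kappa(G)=2$, Corollary \ref{cor:stronglyind} guarantees that $G$ admits at least one strongly independent 2-cut, which is what makes the hypotheses of the latter three propositions applicable.

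I would first dispose of the case in which some 2-cut $S$ of $G$ (not necessarily independent) satisfies $A_S \cong C_4$: Proposition \ref{prop:2ConnectedC4} immediately places $G$ in $\cg_{21}$. So henceforth I assume that $A_S \not\cong C_4$ for every 2-cut $S$ of $G$.

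Under this assumption I would inspect the class of strongly independent 2-cuts, which is non-empty by Corollary \ref{cor:stronglyind}, and split into three mutually exclusive and exhaustive sub-cases according to what the odd component $B_S$ looks like. First, if there exists a strongly independent 2-cut $S$ with $\abs{B_S}>1$ and $B_S \not\cong P_3$, Proposition \ref{prop:2ConnectedGeneral} yields $G \in \cg_{22}$. Second, if there exists a strongly independent 2-cut $S$ with $B_S \cong P_3$, Proposition \ref{prop:2ConnectedP3} yields $G \in \cg_{23} + C_7$. Third, if neither of these holds, then every strongly independent 2-cut $S$ of $G$ satisfies $\abs{B_S}=1$; combined with the standing assumption that $A_S \not\cong C_4$ for every 2-cut $S$, both hypotheses of Proposition \ref{prop:2ConnectedB1} are in force, and it gives $G \in \cg_{21} \cup \cg_{22}$.

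Exhaustiveness follows because for any strongly independent 2-cut $S$ the component $B_S$ is non-empty and is either a single vertex, isomorphic to $P_3$, or contains more than one vertex while not being isomorphic to $P_3$. There is no genuine obstacle in this proof: the substantial structural work has already been carried out in the four preceding propositions, and the present proposition serves as a bookkeeping step confirming that those four cases jointly cover every connected equimatchable claw-free odd graph with $\alpha(G) \ge 3$ and $\kappa(G)=2$.
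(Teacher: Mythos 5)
Your proposal is correct and follows essentially the same route as the paper: first eliminate the case where some 2-cut has $A_S \cong C_4$ via Proposition \ref{prop:2ConnectedC4}, then use Corollary \ref{cor:stronglyind} to obtain a strongly independent 2-cut and split into the same three exhaustive subcases on $B_S$, handled by Propositions \ref{prop:2ConnectedGeneral}, \ref{prop:2ConnectedP3} and \ref{prop:2ConnectedB1}. Nothing is missing.
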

\begin{proof}
Let $S$ be a 2-cut of $G$. By Lemma \ref{lem:favaron}, $G \setminus S$ consists of an even component $A_S$ and an odd component $B_S$.  Proposition \ref{prop:2ConnectedC4} proves that if for some 2-cut $S$ we have that $A_S$ is a $C_4$, then $G \in \cg_{21}$. In what follows we assume that for every 2-cut $S$ of $G$, $A_S$ is not a $C_4$.

By Corollary \ref{cor:stronglyind}, $G$ contains a strongly independent 2-cut. We consider the set $\cs \neq \emptyset$ of all the strongly independent (minimal) 2-cuts, and consider the following disjoint and complementing subcases:
\begin{itemize}
\item{There exists some $S' \in \cs$ such that $\abs{B_{S'}} > 1$ and $B_{S'}$ is not a $P_3$.} In this case by Proposition \ref{prop:2ConnectedGeneral}, $G \in \cg_{22}$.
\item{There exists some $S' \in \cs$ such that $B_{S'}$ is a $P_3$.} In this case, by Proposition \ref{prop:2ConnectedP3}, $G$ is either a $C_7$ or a graph of $\cg_{23}$.
\item{$\abs{B_{S'}}=1$ for every $S' \in \cs$.} In this case, by Proposition \ref{prop:2ConnectedB1}, we have that $G \in \cg_{21} \cup \cg_{22} \cup \cg_{23}$.
\end{itemize}

\end{proof}

\subsection{Equimatchable Claw-Free Odd Graphs with  $\alpha(G)\geq 3$ and $\kappa(G)=1$}\label{sec:1conn}
Let us finally consider equimatchable claw-free odd graphs with independence number at least 3 and connectivity 1. We will show that these graphs fall into the following family.

\begin{definition} \label{defn:GraphFamilyOneConn}
Graph $G \in \cg_1$ if it has a cut vertex $v$ where $G - v$ consists of two connected components $G_1, G_2$ such that for $i \in \set{1,2}$

\begin{enumerate}[i)]
\item \label{itm:G11} Component $G_i$ is either an even clique or a $C_4$.
\item \label{itm:G12} If $G_i$ is a $C_4$, then $N_{G_i}(v)$ consists of two adjacent vertices of $G_i$.
\item \label{itm:G13} If both $G_1$ and $G_2$ are cliques, then $v$ has at least one non-neighbor in each one of $G_1$ and $G_2$.
\end{enumerate}
\end{definition}
We note that $\cg_1 = \set{G_{11}} \cup \cg_{12} \cup \cg_{13}$ where
\begin{eqnarray*}
\cg_{12} & = & \set{G_{12}(x,2p-x,1,2p'-x',x')|~ 1\leq x \leq 2p-1 , 1\leq x' \leq 2p'-1},\\
\cg_{13} & = & \set{G_{13}(x,2p-x,1,1,1,1,1)|~ 0\leq  x \leq 2p-1}
\end{eqnarray*}
where $G_{11}, G_{12}, G_{13}$ are the graphs depicted in Figures \ref{subfig:G11}, \ref{subfig:G12} and \ref{subfig:G13}, respectively.

\begin{proposition}\label{prop:1ConnectedOnlyIf}
If $G \in \cg_1$, then $G$ is a
connected equimatchable claw-free odd graph with $\alpha(G) \geq 3$ and $\kappa(G)=1$.
\end{proposition}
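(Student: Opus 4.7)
The plan is to verify each claimed property of $G \in \cg_1$; the structural properties follow quickly from Definition \ref{defn:GraphFamilyOneConn}, while the bulk of the work is equimatchability via Lemma \ref{lem:notequiIS}.

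First I would handle the easy properties. The vertex $v$ is a cut vertex of $G$ by hypothesis, so $\kappa(G) = 1$, and $G$ is connected (for $v$ to be a cut vertex whose removal yields components $G_1, G_2$, it must be adjacent to at least one vertex of each $G_i$). Since both $K_{2p}$ and $C_4$ have even order, $n = 1 + |V(G_1)| + |V(G_2)|$ is odd. For $\alpha(G) \geq 3$: if both $G_i$ are cliques, \itemref{G13} supplies a non-neighbor $u_i \in G_i$ of $v$, and $\set{v, u_1, u_2}$ is independent (no edges between $G_1$ and $G_2$); if some $G_i$ is a $C_4$, its two non-adjacent vertices together with any vertex of the other component form an independent triple.

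For claw-freeness I would argue by case analysis on the center of a hypothetical induced $K_{1,3}$. A claw centered at $v$ requires three pairwise non-adjacent neighbors of $v$; but by \itemref{G11} and \itemref{G12}, $N_{G_i}(v)$ is either a clique (when $G_i$ is a clique) or a pair of adjacent vertices (when $G_i = C_4$), so any two neighbors of $v$ in the same $G_i$ are adjacent, ruling out three mutually non-adjacent ones. A claw centered at $u \in G_i$ has all neighbors in $V(G_i) \cup \set{v}$: if $G_i$ is a clique, all neighbors of $u$ in $G_i$ are pairwise adjacent; if $G_i = C_4$, $u$ has only two neighbors in $G_i$, and if $u$ is adjacent to $v$ then \itemref{G12} forces $v$ to be adjacent to one of these two neighbors, again precluding a 3-independent set.

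Finally, for equimatchability I would apply Lemma \ref{lem:notequiIS}: it suffices to show $G \setminus I$ has at least two odd components for every independent 3-set $I$. Since $\alpha(K_{2p}) = 1$ and $\alpha(C_4) = 2$, and since the non-neighbors of $v$ inside any single $G_i$ do not form an independent pair (in a clique they cannot, and in a $C_4$ the two non-neighbors are adjacent by \itemref{G12}), the independent 3-sets fall into two short families: (a) $v \in I$, with one non-neighbor of $v$ picked from each $G_i$; (b) $v \notin I$, consisting of the two non-adjacent vertices of one $C_4$-component together with an arbitrary vertex of the other component. In case (a), $G \setminus I$ has no edges across the sides and decomposes into $G_1 - u_1$ and $G_2 - u_2$, each of which is odd (either $K_{2p-1}$ or a $P_3$ obtained by deleting one vertex of $C_4$), giving the required two odd components. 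In case (b), say $G_1$ is the $C_4$-side; removing its non-adjacent pair leaves two mutually non-adjacent $G_1$-vertices, exactly one of which is a non-neighbor of $v$ by \itemref{G12}, so that vertex becomes an isolated odd component of $G \setminus I$. The other remaining $G_1$-vertex is adjacent to $v$, and together with $v$ and $G_2 - u$ forms a subgraph of odd order $|V(G_2)| + 1$; depending on whether $v$ still has a neighbor in $G_2 - u$, this is either one odd connected component, or splits into an even component $\set{v, a}$ and the odd component $G_2 - u$, so in either case a second odd component exists. The main obstacle is simply keeping the case analysis organised; once the short list of independent 3-sets is written down, the verification is just parity bookkeeping on odd cliques, $P_3$'s, and isolated vertices.
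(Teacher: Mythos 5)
Your proposal is correct and follows essentially the same route as the paper: the structural properties are dispatched as "easily verifiable," and equimatchability is proved via Lemma \ref{lem:notequiIS} by enumerating the independent 3-sets into exactly the same two cases (those containing the cut vertex $v$, which take one non-neighbor from each $G_i$, and those avoiding $v$, which take a non-adjacent pair from a $C_4$-component plus one vertex of the other side) and checking two odd components by parity. Your write-up is merely more explicit about claw-freeness and the final sub-case split on whether $v$ retains a neighbor in $G_2 - u$, which the paper phrases as the third vertex of $I'$ being the unique neighbor $v'$ of $v$ in an even clique component.
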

\begin{proof}
All the other properties being easily verifiable, we will only show that $G$ is equimatchable using Lemma \ref{lem:notequiIS}. Note that $V(G_i) \setminus N(v)$ is a non-empty clique where $v$ is a cut vertex of $G$. Therefore, every independent set $I$ with three vertices containing $v$ has exactly one vertex from every $G_i$. In this case, $G \setminus I$ has two odd components. An independent set $I'$ with three vertices that does not contain $v$ must contain two non-adjacent vertices of a $C_4$ and one vertex from the other component. Then one vertex of that $C_4$ is isolated in $G \setminus I'$. Let $v'$ be the unique vertex of $I'$ in the other component $G_i$. If $v$ is a cut vertex of $G$ (which happens when $G_{i}$ is an even clique and $N_{G_i}(v)=\set{v'}$), then $G_{i}-v'$ constitutes a second odd connected component of $G \setminus I'$; otherwise, $G \setminus I'$ consists of two connected components and they are both odd.
\end{proof}

\begin{proposition}\label{prop:KappaOne}
If $G$ is an equimatchable claw-free odd graph with $\alpha(G) \geq 3$ and $\kappa(G)=1$, then $G \in \cg_1$.
\end{proposition}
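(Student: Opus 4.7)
Let $v$ be a cut vertex of $G$. Since $\{v\}$ is a minimal cut-set, Lemma~\ref{lem:ClawFreeCutSets}~\itemref{VertexAdjacentToAtmostTwoComponents} (with $S=\{v\}$) shows that $v$ is adjacent to exactly two components of $G-v$, and by connectedness of $G$ these must be the only components. So $G-v$ has exactly two components $G_1$ and $G_2$, and by Lemma~\ref{lem:ClawFreeCutSets}~\itemref{NeighborhoodOfConnectingVertexIsTwoCliques} each $N_{G_i}(v)$ is a clique. The plan is to argue, after possibly replacing $v$ with another cut vertex, that $|G_1|$ and $|G_2|$ are both even, then to apply Lemma~\ref{lem:isolating} to deduce each $G_i$ is randomly matchable, and finally to verify the remaining properties using Lemma~\ref{lem:ClawFreeCutSets}~\itemref{TwoAdjacentNeigborsInCycle} and the hypothesis $\alpha(G)\ge 3$.

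For the parity step, $|G_1|+|G_2|=n-1$ is even, so both component sizes share a parity; suppose for contradiction both are odd. Fix maximum matchings $M_1,M_2$ of $G_1,G_2$ with exposed vertices $e_1,e_2$; then $M_1\cup M_2$ exposes $\{v,e_1,e_2\}$ in $G$, and if $e_1$ and $e_2$ can be chosen outside $N(v)$, these three vertices are pairwise non-adjacent (since $e_1,e_2$ lie in distinct components), so that after extending to a maximal matching of $G$ at least three vertices remain exposed, contradicting Lemma~\ref{lem:isolating}~\itemref{EveryMMLeavesOneExposed}. A careful case analysis---using Lemma~\ref{lem:ClawFreeAdmitsPerfectMatching} to produce matchings exposing non-cut vertices of $G_i$, Corollary~\ref{coro:ClawFreeTwoConnectedIsFactorCritical} in the 2-connected case, and the block decomposition otherwise---forces at least one $G_i$, say $G_1$, to be an odd clique $K_{2p+1}$ with $v$ complete to it; if the same holds for $G_2$ then $\alpha(G)\le 2$, a contradiction, so otherwise I claim there exists $z\in N(v)\cap G_2$ that is another cut vertex of $G$ for which the two components $\{v\}\cup G_1$ and $G_2-z$ of $G-z$ are both even. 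Replacing $v$ by $z$, I may henceforth assume the cut vertex $v$ satisfies $|G_1|,|G_2|$ even.

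Applying Lemma~\ref{lem:isolating}~\itemref{EvenComponentsRandomlyMatchable} to $M=\{vu\}$ for some $u\in N(v)\cap G_1$, the even component $G_2$ of $G\setminus V(M)=(G_1-u)\cup G_2$ is randomly matchable; by Lemma~\ref{lem:RandomlyMatchable}, $G_2\cong K_{2p}$ or $G_2\cong K_{p,p}$, and claw-freeness excludes $K_{p,p}$ for $p\ge 3$, leaving $G_2\in\{K_{2p},C_4\}$, and symmetrically for $G_1$; this gives property~\itemref{G11}. For property~\itemref{G12}, if $G_i\cong C_4$ then Lemma~\ref{lem:ClawFreeCutSets}~\itemref{TwoAdjacentNeigborsInCycle} forces $N_{G_i}(v)$ to be empty or two adjacent vertices of $G_i$, and the empty option is forbidden by the minimality of $\{v\}$. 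For property~\itemref{G13}, if both $G_i$ are cliques and $v$ were complete to, say, $G_1$, then any size-$3$ independent set of $G$ would require either two vertices of $G_2$ (impossible since $G_2$ is a clique) or $v$ together with two of its non-neighbours (impossible since $v$ has none in $G_1$), yielding $\alpha(G)\le 2$, a contradiction.

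The main obstacle is the parity step: showing that the configuration where both $|G_i|$ are odd either yields an immediate contradiction or permits a clean switch to a new cut vertex requires carefully coordinating the choice of maximal matchings in each $G_i$ with the position of $v$'s neighbourhood, and handling together the possibilities that $G_i$ is $2$-connected, has its own cut vertices, or is very small.
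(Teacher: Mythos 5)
Your treatment of the case in which some cut vertex $v$ splits $G$ into two \emph{even} components is correct and is essentially the paper's argument: apply Lemma~\ref{lem:isolating}~\itemref{EvenComponentsRandomlyMatchable} to the single-edge matching $\set{uv}$, identify each component as $K_{2p}$ or $C_4$, and verify properties \itemref{G12} and \itemref{G13} via Lemma~\ref{lem:ClawFreeCutSets}~\itemref{TwoAdjacentNeigborsInCycle} and $\alpha(G)\ge 3$.

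The genuine gap is the complementary case, where the cut vertex yields two odd components -- which is the bulk of the work -- and here your sketch is not only incomplete but points in the wrong direction. You defer everything to ``a careful case analysis'' and announce its outcome, but both announced outcomes are untenable. First, the configuration ``$G_1$ is an odd clique with $v$ complete to it'' cannot arise: the analysis of this case (carried out in the paper) shows that \emph{both} $G_1$ and $G_2$ end up being odd cliques, so any independent set of size $3$ must consist of $v$ together with one non-neighbour of $v$ in each component; hence $\alpha(G)\ge 3$ forces $v$ to be complete to \emph{neither}. Second, your proposed repair -- a cut vertex $z\in N_{G_2}(v)$ such that $G-z$ has components $\set{v}\cup G_1$ and $G_2-z$ -- requires $N_{G_2}(v)=\set{z}$, whereas in this case $v$ necessarily has at least two neighbours in each component (this is exactly what the matching arguments establish), so no such $z$ exists. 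The correct resolution is that the both-odd case is impossible outright: applying Lemma~\ref{lem:isolating}~\itemref{EvenComponentsRandomlyMatchable} to $\set{u_1v}$ one shows $G_1-u_1$ is connected randomly matchable, rules out the $C_4$ option by exhibiting matchings whose removal isolates two non-adjacent vertices (contradicting Lemma~\ref{lem:isolating}~\itemref{OneOddEqimatchableComponent}), shows $u_1$ is complete to $G_1-u_1$ so that $G_1$ is an odd clique (and symmetrically $G_2$), and then the independent set $\set{v,w_1,w_2}$ of non-neighbours of $v$ disconnects $G$ into two even components, contradicting Lemma~\ref{lem:notequiIS}. None of these steps appears in your proposal, and the ``switch to a better cut vertex'' strategy you substitute for them would fail; as it stands the proof is incomplete.
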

\begin{proof}
By Lemma \ref{lem:ClawFreeCutSets} \itemref{VertexAdjacentToAtmostTwoComponents}, every cut vertex of $G$ separates it into two connected components $G_1$ and $G_2$. From parity considerations, $G_1$ and $G_2$ are either both even or both odd. We consider two complementing cases:
\begin{itemize}
\item {\textbf{Graph $G$ has a cut vertex $v$ such that $G_1$ and $G_2$ are even.}} Let $u$ be a vertex of $G_1$ adjacent to $v$. Considering the matching $M$ consisting of the single edge $uv$ and applying Lemma \ref{lem:isolating} \itemref{EvenComponentsRandomlyMatchable}, we conclude that $G_2$ is randomly matchable, i.e., either an even clique or a $C_4$ by Proposition \ref{prop:EvenEquimatchable}. By symmetry, the same holds for $G_1$; thus, \itemref{G11} in Definition \ref{defn:GraphFamilyOneConn} holds. Assume that $G_i$ is a $C_4$ for some $i \in \set{1,2}$. Then, by Lemma \ref{lem:ClawFreeCutSets} \itemref{TwoAdjacentNeigborsInCycle}, $v$ is adjacent to exactly two adjacent vertices of $G_i$; thus, \itemref{G12} in Definition \ref{defn:GraphFamilyOneConn} holds. Finally, since $\alpha(G)\geq 3$, \itemref{G13} in Definition \ref{defn:GraphFamilyOneConn} also holds.

\item {\textbf{Every cut vertex $v$ of $G$ separates it into two odd components.}} We will conclude the proof by showing that this case is not possible. No two cut vertices of $G$ are adjacent, since otherwise one of them disconnects $G$ into two even components. Let $v$ be a cut vertex, $G_1$ and $G_2$ be the connected components of $G - v$, and $u_1$ be a neighbor of $v$ in $G_1$. Applying Lemma \ref{lem:isolating}  \itemref{EvenComponentsRandomlyMatchable} to the matching consisting of the single edge $u_1v$, we conclude that $G_1 - u_1$ is randomly matchable. Then, either $G_1 - u_1$ is connected, or by Lemma \ref{lem:ClawFreeCutSets} \itemref{VertexAdjacentToAtmostTwoComponents}, $G_1 - u_1$ has exactly two connected components. Moreover, since $u_1$ is not a cut vertex of $G$, $v$ has a neighbor in each of these components. If there are two such components, the neighbors of $v$ in these components do not form a clique, contradicting Lemma \ref{lem:ClawFreeCutSets} \itemref{NeighborhoodOfConnectingVertexIsTwoCliques}. Therefore, $G_1 - u_1$ is connected, and by Proposition \ref{prop:EvenEquimatchable} we conclude that it is either a $C_4$ or an even clique.

    Suppose that $G_1-u_1$ is a $C_4$, say $w_1 w_2 w_3 w_4$. By Lemma \ref{lem:ClawFreeCutSets} \itemref{TwoAdjacentNeigborsInCycle}, $N_{G_1-u_1}(v)$ consists of two adjacent vertices, say $w_1,w_2$. Consider the matching $M=\set{vw_1, w_2 w_3}$. $V(M)$ disconnects $\set{u_1,w_4}$ from $G$. If $u_1$ and $w_4$ are non-adjacent, they contradict Lemma \ref{lem:isolating} \itemref{OneOddEqimatchableComponent}. Therefore, $u_1$ is adjacent to $w_4$. Now the matching $M=\set{v w_2, u_1 w_4}$ disconnects the vertices $w_1$ and $w_3$ from $G$ and leaves two odd components, contradicting Lemma \ref{lem:isolating} \itemref{OneOddEqimatchableComponent}. Hence, we conclude that $G_1-u_1$ cannot be a $C_4$ and therefore has to be an even clique.
    
    We now show that $u_1$ is complete to $G_1 - u_1$. Suppose that there exists a vertex $z$ of $G_1 - u_1$ that is non-adjacent to $u_1$. Then $z$ is non-adjacent to $v$ since otherwise $v$ has two non-adjacent vertices, namely $u_1$ and $z$, in its neighborhood in $G_1$, a contradiction by Lemma \ref{lem:ClawFreeCutSets} \itemref{NeighborhoodOfConnectingVertexIsTwoCliques}. Let $z'$ be a vertex of $G_1 - u_1$ that is adjacent to $v$. Recall that such a vertex exists since $u_1$ is not a cut vertex of $G$, and clearly, $z \neq z'$. Now consider the matching consisting of the edge $vz'$ and a perfect matching of the even clique $G_1 \setminus \set{u_1,z,z'}$. This matching leaves $u_1$ and $z$ as two odd components, a contradiction by Lemma \ref{lem:isolating} \itemref{OneOddEqimatchableComponent}. Therefore, $G_1$ is an odd clique, and $v$ is adjacent to at least two vertices (namely, $u_1$ and $z'$) of $G_1$. By symmetry, the same holds for $G_2$.

Since $\alpha(G)\geq 3$, $v$ is not adjacent to some vertex $w_1$ of $G_1$ and some vertex $w_2$ of $G_2$. Then $S=\set{v,w_1,w_2}$ is an independent set of $G$ and $G \setminus S$ consists of two even components, contradicting Lemma \ref{lem:notequiIS}.
\end{itemize}
\end{proof}

\section{Summary and Recognition Algorithm}\label{sec:Summary}
In this section we summarize our results in Theorem \ref{thm:main} and use it to develop an efficient recognition algorithm.

\begin{theorem}\label{thm:main}
A graph $G$ is a connected claw-free equimatchable graph if and only if one of the following holds:
\begin{enumerate}[i)]
\item \label{itm:TheoremC4} $G$ is a $C_4$.

\item \label{itm:TheoremEvenClique} $G$ is a $K_{2p}$ for some $p \geq 1$.

\item \label{itm:TheoremAlpha2} $G$ is odd and $\alpha(G)\leq 2$.

\item \label{itm:TheoremKappa1} $G \in \cg_1$.

\item \label{itm:TheoremKappa2} $G \in \cg_{21} \cup \cg_{22} \cup \cg_{23} + C_7$.

\item \label{itm:TheoremKappa3} $G \in \cg_3$.

\end{enumerate}
\end{theorem}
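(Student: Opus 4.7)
The plan is to assemble the results developed earlier; no new structural insight is needed in this proof. The statement is an if-and-only-if whose two directions separate cleanly.

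For the forward direction, I would let $G$ be a connected claw-free equimatchable graph and split on parity. If $G$ is even, Proposition \ref{prop:EvenEquimatchable} immediately yields either $G \cong C_4$ (case \itemref{TheoremC4}) or $G \cong K_{2p}$ (case \itemref{TheoremEvenClique}), so I may assume $G$ is odd. If $\alpha(G) \leq 2$, we are in case \itemref{TheoremAlpha2}; otherwise $\alpha(G) \geq 3$, and Corollary \ref{cor:4conn} gives $\kappa(G) \leq 3$. Since $G$ is connected, $\kappa(G) \in \set{1,2,3}$, and these three subcases are disposed of, respectively, by Proposition \ref{prop:KappaOne} (yielding $G \in \cg_1$), Proposition \ref{prop:KappaTwo} (yielding $G \in \cg_{21} \cup \cg_{22} \cup \cg_{23} + C_7$), and the implication \itemref{lemkappa3one}$\Rightarrow$\itemref{lemkappa3three} of Lemma \ref{lem:kappa3} (yielding $G \in \cg_3$).

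For the reverse direction, I would verify each of the six listed conditions in turn. Cases \itemref{TheoremC4} and \itemref{TheoremEvenClique} are immediate from the definitions of equimatchability and claw-freeness. Case \itemref{TheoremAlpha2} is exactly Lemma \ref{lem:EibenAlphaTwo}. The three family cases \itemref{TheoremKappa1}, \itemref{TheoremKappa2}, and \itemref{TheoremKappa3} are precisely the statements of Propositions \ref{prop:1ConnectedOnlyIf}, \ref{prop:2ConnectedOnlyIf}, and \ref{prop:3ConnectedOnlyIf}, each of which establishes equimatchability by enumerating the few possible types of independent sets of size three and applying Lemma \ref{lem:notequiIS}.

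The only real obstacle is bookkeeping: one must check that the case split on the parity of $n$, on $\alpha(G)$, and on $\kappa(G)$ in the forward direction is exhaustive, and that the correct proposition is invoked for each branch. In particular, the trivial graph $K_1$ and the small cliques $K_3, K_5, \ldots$ are absorbed by case \itemref{TheoremAlpha2}, while $C_7$ is carried explicitly through case \itemref{TheoremKappa2}. Since the genuine structural work is already contained in the cited propositions and lemmas, the present proof reduces to a short, clean case distinction with no further difficulty.
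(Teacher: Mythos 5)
Your proposal is correct and follows essentially the same route as the paper: the forward direction splits on parity, then on $\alpha(G)$, then on $\kappa(G)\in\set{1,2,3}$ via Corollary \ref{cor:4conn}, invoking Propositions \ref{prop:KappaOne}, \ref{prop:KappaTwo} and the $\kappa(G)=3$ characterization (the paper cites Proposition \ref{prop:3conn} directly where you cite the equivalent implication of Lemma \ref{lem:kappa3}), while the reverse direction is exactly Propositions \ref{prop:EvenEquimatchable}, \ref{prop:1ConnectedOnlyIf}, \ref{prop:2ConnectedOnlyIf}, \ref{prop:3ConnectedOnlyIf} together with Lemma \ref{lem:EibenAlphaTwo}. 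The only cosmetic difference is that you make the $\alpha(G)\leq 2$ branch of the converse explicit, which the paper leaves implicit.
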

\begin{proof}
One direction follows from Proposition \ref{prop:EvenEquimatchable}, Lemma \ref{lem:EibenAlphaTwo} and Propositions \ref{prop:1ConnectedOnlyIf}, \ref{prop:2ConnectedOnlyIf} and \ref{prop:3ConnectedOnlyIf} in the order of the items from $\itemref{TheoremC4}$ to $\itemref{TheoremKappa3}$. We proceed with the other direction. Let $G$ be an equimatchable claw-free graph. If $G$ is even, then by Proposition \ref{prop:EvenEquimatchable}, it is either a $C_4$ or an even clique. It remains to show that if $G$ is odd and $\alpha(G) \geq 3$, then $G$ is either a $C_7$ or in one of the families $\cg_1, \cg_{21}, \cg_{22}, \cg_{23}, \cg_3$. If $\kappa(G)=1$, then $G \in \cg_1$ by Proposition \ref{prop:KappaOne}. If $\kappa(G)=2$, then $G \in \cg_{21} \cup \cg_{22} \cup \cg_{23} + C_7$ by Proposition \ref{prop:KappaTwo}. If $\kappa(G)=3$, then $G \in \cg_3$ by Proposition \ref{prop:3conn}.
\end{proof}

The recognition problem of claw-free equimatchable graphs is clearly polynomial since each one of the properties can be tested in polynomial time. Equimatchable graphs can be recognized in time \bigoh{m \bar{m}} (see \cite{demange_ekim_equi}), where $m$ (respectively $\bar{m})$ is the number of edges (respectively non-edges) of the graph. Claw-free graphs can be recognized in \bigoh{m^{\frac{\omega+1}{2}}} time, where $\omega$ is the exponent of the matrix multiplication complexity (see \cite{Kloks2000}). The currently best exponent for matrix multiplication is $\omega \approx 2.37286$ (see \cite{LeGall}), yielding an overall complexity of \bigoh{ m (\bar{m} + m^{0.687})}.

We now show that our characterization yields a more efficient recognition algorithm.

\alglanguage{pseudocode}

\begin{algorithm}[H]
\caption{Claw-free equimatchable graph recognition}\label{alg:recognition}
\begin{algorithmic}[1]
\Require{A graph $G$.}
\Statex
\If {$G$ is even}
\State \Return ($G$ is a clique or $G$ is a $C_4$).\label{step:CliqueAndC4}
\EndIf

\If {$\bar{G}$ is triangle free}
\Return \textbf{true}.\label{step:TriangleFree}
\EndIf

\If {$G$ is a $C_7$ or $G$ is a $G_{11}$}
\Return \textbf{true}.
\EndIf

\Statex
\State Compute the unique twin-free graph $H$ and multiplicities $n_1, \ldots, n_k$ such that
$G=H(n_1, \ldots, n_k)$. \label{step:TwinElimination}

\If {$H$ is isomorphic to neither one of $G_{12}, G_{13}, G_{21}, G_{22}, G_{23}, G_3$ nor to a relevant subgraph of it}  \label{step:Isomorphism}
\State \Return \textbf{false}
\Else
\State let $H$ be isomorphic to $G_x$ for some $x \in \set{12, 13, 21, 22, 23, 3}$ or to a relevant subgraph of it.
\EndIf

\State \Return \textbf{true} if and only if $(n_1,\ldots,n_k)$ matches the multiplicity pattern in the definition of $\cg_x$. \label{step:CheckMultiplicities}
\end{algorithmic}
\end{algorithm}

\begin{corollary}\label{cor:recognition}
Algorithm \ref{alg:recognition} can recognize equimatchable claw-free graphs in time \bigoh{ m^{1.407}}.
\end{corollary}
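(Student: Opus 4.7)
The plan is to verify the correctness of Algorithm \ref{alg:recognition} by matching each branch to a clause of Theorem \ref{thm:main}, and then to bound each step's running time. For correctness, line \ref{step:CliqueAndC4} dispatches the even case via Proposition \ref{prop:EvenEquimatchable} (items \itemref{TheoremC4} and \itemref{TheoremEvenClique}). Since $\bar{G}$ is triangle-free if and only if $\alpha(G)\le 2$, the test at line \ref{step:TriangleFree}, combined with the fact that $G$ is odd at that point, triggers Lemma \ref{lem:EibenAlphaTwo} and correctly certifies item \itemref{TheoremAlpha2}. The explicit checks for $C_7$ and $G_{11}$ pick off the two sporadic graphs that are not captured by the subsequent clique-blow-up mechanism. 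The remaining equimatchable claw-free graphs belong, by Theorem \ref{thm:main}, to the union of $\cg_{12}, \cg_{13}, \cg_{21}, \cg_{22}, \cg_{23}, \cg_3$; by their definitions, each such graph is precisely a clique-blow-up $G_x(n_1,\ldots,n_k)$ of a fixed small skeleton $G_x \in \{G_{12}, G_{13}, G_{21}, G_{22}, G_{23}, G_3\}$ under prescribed multiplicity constraints.

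The correctness of lines \ref{step:TwinElimination}--\ref{step:CheckMultiplicities} rests on the fact that twin-elimination inverts clique-blow-up uniquely: the canonical twin-free quotient $H$ of $G$, together with the multiplicity vector $(n_1,\ldots,n_k)$, is unique, and $G$ belongs to $\cg_x$ if and only if $H$ is isomorphic to $G_x$ (or, when some multiplicities are allowed to vanish, to one of the finitely many subgraphs of $G_x$ obtained by deletion, the ``relevant subgraphs'' of line \ref{step:Isomorphism}) and $(n_1,\ldots,n_k)$ satisfies the arithmetic constraints listed in the definition of $\cg_x$. Hence lines \ref{step:Isomorphism}--\ref{step:CheckMultiplicities} reduce to a constant number of isomorphism tests against a fixed finite list of small graphs together with a constant-size check of parity and size inequalities. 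In particular, any graph failing every branch of the algorithm is not a claw-free equimatchable graph by Theorem \ref{thm:main}, so the algorithm rejects correctly.

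For the running time, the dominating cost is line \ref{step:TriangleFree}. Searching for three pairwise non-adjacent vertices of $G$ (equivalently, a triangle in $\bar{G}$) can be carried out in \bigoh{m^{2\omega/(\omega+1)}} time using an Alon--Yuster--Zwick-style algorithm; with the current record $\omega<2.37286$, this evaluates to \bigoh{m^{1.407}}. Every other step is comparatively cheap: parity, clique, and $C_4$ tests at line \ref{step:CliqueAndC4}, together with the $C_7$ and $G_{11}$ checks, run in \bigoh{n+m}; the canonical twin partition and quotient at line \ref{step:TwinElimination} can be computed in \bigoh{n+m} via a modular-decomposition routine; and because $H$ has at most a constant number of vertices (bounded by the largest skeleton), lines \ref{step:Isomorphism}--\ref{step:CheckMultiplicities} execute in \bigoh{1} time. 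The main subtlety I anticipate is that $\bar{G}$ may be quadratically denser than $G$, so a direct application of triangle-detection to $\bar{G}$ does not immediately yield the desired $m$-dependent bound; I would address this by adapting the AYZ framework to search for an independent triple directly on $G$ via its non-adjacency relation, without materialising $\bar{G}$, so that the same $2\omega/(\omega+1)$ exponent is preserved.
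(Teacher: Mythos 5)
Your proposal is correct and follows essentially the same route as the paper: correctness read off from Theorem \ref{thm:main}, the triangle test on $\bar{G}$ via Alon--Yuster--Zwick as the dominant \bigoh{m^{2\omega/(\omega+1)}} step, linear-time twin elimination via modular decomposition, and constant-time isomorphism and multiplicity checks against the fixed skeletons and their relevant subgraphs. The one place you go beyond the paper is in flagging that $\bar{G}$ may have far more edges than $G$; the paper applies the $m$-parameterized bound silently, and the cleanest repair is not a reworked AYZ on the non-adjacency relation but Mantel's theorem: if $\bar{m}>n^{2}/4$ then $\bar{G}$ certainly contains a triangle and one may answer immediately, while otherwise $m\ge\binom{n}{2}-n^{2}/4=\Omega(n^{2})$, hence $\bar{m}=O(m)$ and running the standard algorithm on $\bar{G}$ already gives the stated \bigoh{m^{1.407}} bound.
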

\begin{proof}
The correctness of Algorithm \ref{alg:recognition} is a direct consequence of Theorem \ref{thm:main}. As for its time complexity, Step \ref{step:CliqueAndC4} can be clearly performed in linear time. Step \ref{step:TriangleFree} can be performed in time \bigoh{ m^{\frac{2 \omega}{\omega+1}}} = \bigoh{ m^{1.407}} (see \cite{Alon1997}).

For every graph $G$ there is a unique twin-free graph $H$ and a unique vector $(n_1, \ldots, n_k)$ of vertex multiplicities such that $G=H(n_1, \ldots, n_k)$.
The graph $H$ and the vector $(n_1,\ldots,n_k)$ can be computed from $G$ in linear time using partition refinement, i.e.\ starting from the trivial partition consisting of one set, and iteratively refining this partition using the closed neighborhoods of the vertices (see \cite{HabibPV99}).
Each set of the resulting partition constitutes a set of twins.
Therefore, Step \ref{step:TwinElimination} can be performed in linear time.

We now note that for some values of $x \in \set{12, 13, 21, 22, 23, 3}$, at most one entry of the multiplicity vector is allowed to be zero. In this case $H$ is not isomorphic to $G_x$ but to an induced subgraph of it with one specific vertex removed. We refer to these graphs as \emph{relevant subgraphs} in the algorithm.

As for Step \ref{step:Isomorphism}, it takes a constant time to decide whether an isomorphism exists: if $H$ has more than 9 vertices, it is isomorphic to neither one of $G_{11}, G_{12}, G_{13}, G_{21}, G_{22}, G_{23}, G_3$ nor to a subgraph of them; otherwise, $H$ has to be compared to each one of these graphs and their relevant subgraphs, where each comparison takes constant time. Finally, Step \ref{step:CheckMultiplicities} can be performed in constant time.

We conclude that the running time of Algorithm \ref{alg:recognition} is dominated by the running time of Step \ref{step:TriangleFree}, i.e.\ \bigoh{ m^{1.407}}.
\end{proof}

\textbf{References}

\end{document}